\def\dfinitynote{Part of this work was done while at DFINITY.}
\documentclass[sigconf,screen]{acmart}
\usepackage{algorithm}
\usepackage[noend]{algpseudocode}
\usepackage{array}
\usepackage{circuitikz}
\usepackage{mathtools}
\usepackage{nicematrix}
\usepackage{subcaption}
\usepackage{tikz}
\usepackage{tikzit}
\usepackage{wrapfig}
\algdef{SE}[EVENT]{Upon}{EndUpon}[1]{\textbf{upon}\ #1\ \algorithmicdo}{\algorithmicend~\textbf{upon}}
\makeatletter
\ifthenelse{\equal{\ALG@noend}{t}}%
  {\algtext*{EndUpon}}
  {}%
\makeatother%
\tikzstyle{red}=[draw=none, fill=red, shape=circle, tikzit shape=circle]

\tikzstyle{data}=[draw=black, thick, ->, >=latex]
\tikzstyle{control}=[draw={rgb,255: red,64; green,64; blue,64}, densely dashed, ->, >=Latex]
\tikzstyle{dotted}=[-, draw={rgb,255: red,191; green,191; blue,191}]
\tikzstyle{data fail}=[-, draw=black, thick]
\tikzstyle{cross}=[-, draw=black, thick]

\usepackage{nicematrix}
\usepackage{tikz}
\usetikzlibrary{decorations.pathreplacing}
\begin{document}
\title[AVID with Low Space and Communication Complexity]{ Asynchronous Verifiable Information Dispersal\texorpdfstring{\\}{}with
%Efficient Dispersal, Storage, Retrieval, and Recovery}
Low Space and Communication Complexity}

\author{Thomas Locher}
\affiliation{%
  \institution{DFINITY}
  \country{Switzerland}
}

\author{Yvonne-Anne Pignolet}
\ifdefined\dfinitynote\authornote{\dfinitynote}\fi
\affiliation{%
  \institution{StableClear}
  \country{Switzerland}
}

\begin{CCSXML}
<ccs2012>
   <concept>
       <concept_id>10010520.10010575.10011743</concept_id>
       <concept_desc>Computer systems organization~Fault-tolerant network topologies</concept_desc>
       <concept_significance>500</concept_significance>
       </concept>
   <concept>
       <concept_id>10003752.10003809.10010172</concept_id>
       <concept_desc>Theory of computation~Distributed algorithms</concept_desc>
       <concept_significance>500</concept_significance>
       </concept>
   <concept>
       <concept_id>10002951.10003152.10003517.10003519</concept_id>
       <concept_desc>Information systems~Distributed storage</concept_desc>
       <concept_significance>500</concept_significance>
       </concept>
 </ccs2012>
\end{CCSXML}

\ccsdesc[500]{Computer systems organization~Fault-tolerant network topologies}
\ccsdesc[500]{Theory of computation~Distributed algorithms}
\ccsdesc[500]{Information systems~Distributed storage}
% TODO: add concept_ids below from https://dl.acm.org/ccs (needed for HotCRP)
\ccsdesc[300]{Theory of computation~Cryptographic protocols}
\ccsdesc[300]{Theory of computation~Error-correcting codes}

\keywords{asynchronous verifiable information dispersal, distributed storage, Byzantine fault tolerance, erasure coding, regenerating codes, node recovery}

\copyrightyear{2026}
\acmYear{2026}
\setcopyright{cc}
\setcctype{by}
\acmConference[SPAA '26]{38th ACM Symposium on Parallelism in Algorithms and Architectures}{July 06--10, 2026}{London, United Kingdom}
\acmBooktitle{38th ACM Symposium on Parallelism in Algorithms and Architectures (SPAA '26), July 06--10, 2026, London, United Kingdom}
\acmDOI{10.1145/3816782.3819202}
\acmISBN{979-8-4007-2761-0/2026/07}

% !TEX root = avid_spaa.tex
\begin{abstract}
The primary goal of a distributed storage system is to ensure that clients can both write and read data in a reliable and consistent manner, even in the presence of failures. While existing asynchronous verifiable information dispersal (AVID) protocols achieve optimal space complexity for storage and communication complexity for data retrieval in a Byzantine setting, the crucial operations of data dispersal and node recovery have received less attention.
We propose an efficient AVID protocol that simultaneously guarantees low complexities for dispersal, storage, retrieval, and recovery. At the core of the proposed protocol lies a novel mechanism to encode data in a two-dimensional matrix and a bespoke dispersal algorithm. The protocol maintains an optimal communication complexity for retrieval while substantially improving upon the state of the art for recovery.
Additionally, we describe a protocol variant that offers a reduced space complexity and communication complexity for dispersal, at the expense of a higher communication complexity for recovery and, depending on its parameterization, also retrieval.
As the proposed protocols strike a balance across all considered metrics, they are suitable for a broad range of real-world use cases.
\end{abstract}

\maketitle

% !TEX root = ./avid_spaa.tex

\section{Introduction}
\label{sec:introduction}

Distributed storage systems are a foundational pillar of modern infrastructure, supporting applications that range from global-scale cloud services to decentralized archives and blockchain scalability solutions. Emerging decentralized storage platforms such as Walrus~\cite{danezis2025} and Shelby~\cite{goren2025} are beginning to deploy these techniques at scale, underscoring the practical relevance of distributed storage systems.
The primary bottleneck for scaling such systems is the significant overhead---comprising storage footprint, bandwidth consumption, and maintenance costs---required to manage data across nodes that may (and do) fail at any time.
Offering resilience against a broad range of disruptions, including network partitions, software bugs, and malicious behavior, is a challenging requirement.
\emph{Asynchronous verifiable information dispersal} (AVID) protocols have been established as a core primitive to provide robust guarantees without any assumptions on network delays.

Since their introduction by Cachin and Tessaro~\cite{cachin2005}, AVID protocols have been evaluated primarily on two metrics: space complexity (storage requirement) and communication complexity (bandwidth requirement) for data retrieval.
State-of-the-art constructions achieve a \emph{storage overhead factor} of $1.5$ (i.e., $1.5$ times the original data is stored) when strictly less than a third of all nodes are \emph{Byzantine}, that is, exhibit arbitrary faulty behavior. This factor is optimal: for any data item $m$, storing less than $1.5|m|$ bits in total would allow $\approx$$n/3$ Byzantine nodes to concentrate their share and deny access to $m$, where $|m|$ denotes the size of $m$ in bits and $n$ is the total number of storage nodes. 
This focus, however, does not reflect the operational realities of many modern workloads.
In particular, there is an explosive growth of ``dark data'' and \emph{write-once-read-never} archives~\cite{splunk2019, aaa2025}, estimated to account for 30--80\% of all data in highly regulated industries such as healthcare and finance, where information must be reliably ingested and retained over long periods of time but is practically never accessed.
Thus, the costs of initial data dispersal (distributing data across nodes), continuous storage footprint, background node repair, and retrieval (accessing data on demand) must all be considered, and the right balance among them depends on the use case scenario.

Many AVID protocols incur a significant dispersal overhead and lack efficient \emph{node recovery}: when a node fails, the replacement node has to download the entire dataset to reconstruct its share of the data---an approach that is practically infeasible at scale, even when node failures are infrequent.
Recent work has begun to address this shortcoming by reducing the dispersal and recovery overheads.
For example, Alhaddad et al.~\cite{alhaddad2022} proposed an AVID protocol with an overhead factor for dispersal of only $3$; however, the protocol incurs the same (sub-optimal) overhead factor of $3$ for storage and data retrieval as well.
It was later shown how to keep the overhead factor of storage and data retrieval at $1.5$ at a higher but still constant overhead factor for data dispersal~\cite{alhaddad2024}.  This approach is compatible with minimum storage regenerating (MSR) codes~\cite{dimakis2010} to improve recovery performance. While these codes enable efficient recovery, they typically incur higher communication costs during retrieval.
Danezis et al.~\cite{danezis2025} introduced the Red Stuff protocol, which prioritizes efficient recovery. % through a specialized encoding.
Yet, this efficiency comes at a steep cost, requiring a storage overhead factor of $4.5$, three times the theoretical minimum.

In this paper, we present AVID protocols that improve upon the state of the art across the entire operational spectrum.
Our approach is built from standard primitives (erasure codes, vector commitments); the contribution lies in a novel two-dimensional encoding scheme and dispersal algorithm.
Concretely, our main contribution is a protocol that achieves a storage overhead factor of $3$, a 33\% improvement over Red Stuff, while simultaneously reducing communication complexity for both dispersal and node recovery by 14\% and 40\%, respectively.
As an additional contribution, we introduce a parameterized variant of our protocol based on MSR codes, trading off the cost of storage and dispersal for retrieval and recovery efficiency, thereby enabling deployment tuning to use-case requirements.

Table~\ref{tab:comparison} provides a comparison of various distributed storage protocols.
The additive term independent of $|m|$ is $O(n^2|\pi|)$ for all metrics across all listed protocols, where $|\pi|$ denotes the commitment-proof size, which is discussed in detail in \S\ref{sec:model}. Depending on the commitment and batching scheme, the additive term for storage and retrieval is $O(n|\pi| + n^2\log(n))$ for certain protocols.
While no single protocol dominates all metrics, the proposed protocols show that efficient recovery can be achieved with lower storage overhead and reduced dispersal cost.

\begin{table}[t]
  \begin{center}
    \begin{tabular}{ l r r r r}
      Protocol 								& Storage 	& Dispersal 	& Retrieval 	& Recovery \\
      \hline
      Cachin-Tessaro~\cite{cachin2005} & $1.5 |m| $	& $\boldsymbol{ n |m|}$		& $1.5 |m|$	& $\boldsymbol{1.5 |m|}$ \\
      Alhaddad et al.~\cite{alhaddad2022}	& $3 |m| $	& $3 |m| $		& $3 |m|$	& $\boldsymbol{3 |m|}$ \\
      Alhaddad et al.~\cite{alhaddad2024} 	& $1.5 |m| $	& $9 |m| $		& $1.5 |m|$	& $\boldsymbol{1.5 |m|}$ \\
      Danezis et al.~\cite{danezis2025}			& $4.5 |m| $	& $7 |m| $		& $1.5 |m|$	& $7.5 {|m|}/{n}$ \\
      This work (\S\ref{sec:avid})				 & $3 |m| $	& $6 |m| $	& $1.5 |m|$	& $4.5 {|m|}/{n}$ \\
      This work (\S\ref{sec:avid_msr}, $\gamma = 1.8$) 			& $1.8 |m| $	& $6.12 |m| $		& $1.6 |m|$	& $16.2 {|m|}/{n}$ \\
      This work (\S\ref{sec:avid_msr}, $\gamma = 1.5$) 			& $1.5 |m| $	& $4.5 |m| $		& $1.5 |m|$	& $\boldsymbol{1.5 |m|}$ \\
    \end{tabular}
    \caption{Comparison of the space and communication complexity for distributed storage protocols for data item $m$ of size $|m|$ with up to $\lfloor (n-1)/3 \rfloor$ Byzantine nodes in a network of $n$ nodes. Entries in \textbf{bold} are \emph{not} asymptotically optimal with respect to $|m|$ and $n$.}~\label{tab:comparison}
  \end{center}\vspace{-14pt}
\end{table}

%Note that depending on the instantiation of the commitment scheme used in these protocols, the storage and communication overhead as well as cryptographic assumptions and setup requirements vary (but remain polynomial in~$n$).
%Since the commitment proof overhead is negligible compared to the data size for large data, only the overhead with respect to the data size is shown in this table.\yap{this paragraph could be omitted}

Note that the main protocol (\S\ref{sec:avid}) guarantees an optimal overhead for data retrieval and strictly outperforms the protocol of Danezis et al.\ in all other metrics.
The complexities of the protocol variant (\S\ref{sec:avid_msr}) can be tuned using a parameter $\gamma \ge 3/2$, which corresponds to the storage overhead factor.
When choosing, e.g., $\gamma = 1.8$, the space complexity is nearly optimal
but it still underperforms the main protocol in all other dimensions.
For $\gamma = 3/2$, the complexities match the complexities of the protocol by Alhaddad et al.~\cite{alhaddad2024}, with the dispersal complexity cut in half.
Thus, we also improve upon the state of the art for the class of space-optimal AVID protocols.

The constants in Table~\ref{tab:comparison} translate directly to operating costs.
Consider a network of $n=100$ nodes with 1~Gbps bandwidth per node storing a 1~GB data item.
When using our main protocol, each node stores 30~MB, versus 45~MB when using Red~Stuff---a saving of 15~MB per item per node.
The commitment-proof overhead is $\approx$60~KB (using Merkle trees) or $\approx$6~KB (using KZG commitments~\cite{kate2010}) per item per node---three to four orders of magnitude below the data footprint---, confirming that the term dependent on $|m|$ dominates operational costs.
At corpus scale (1~PB of content), Red~Stuff requires 4.5~PB of network-wide storage while our main protocol requires only 3~PB, a reduction of 1.5~PB.
When a node fails, the replacement node must download 75~TB under Red~Stuff versus 45~TB under our protocol in  \S\ref{sec:avid}, saving 30~TB ($\approx$2.8~days of repair traffic at saturated bandwidth).
Depending on storage and bandwidth costs, the protocol variant in \S\ref{sec:avid_msr} is attractive for deployments that prioritize minimizing storage further, accepting a higher recovery and, for $\gamma > 3/2$, retrieval overhead.

The remainder of the paper is organized as follows. The system model and preliminaries are introduced in~\S\ref{sec:model}. The main AVID protocol and the protocol variant are presented in~\S\ref{sec:avid} and \S\ref{sec:avid_msr}, respectively. Related work is summarized in~\S\ref{sec:related_work}, and~\S\ref{sec:conclusion} concludes the paper.

% !TEX root = ./avid_spaa.tex
\section{Model and Preliminaries}
\label{sec:model}

The considered system consists of a storage network of $n$ nodes $\{v_1, \ldots , v_n\}$ and an unbounded set of clients $\{c, c', \ldots \}$. The nodes comprising the storage network are also called \emph{storage nodes}.
The goal is to store data efficiently and keep the stored data consistent and available despite the presence of up to $t \coloneqq \lfloor (n-1)/3\rfloor$ Byzantine nodes in the storage network. A Byzantine node may deviate from a given protocol in arbitrary ways, e.g., it may not store data, provide invalid or inconsistent data, or simply crash and stop participating in protocol execution entirely.
Clients may also be Byzantine, aiming to transition the storage network into an inconsistent state or causing undue load in terms of consumed storage space and bandwidth, possibly by colluding with the Byzantine nodes.
Nodes and clients that never fail and adhere to a given protocol at all times are called \emph{honest}.

We assume that the network topology is static and known to all storage nodes and clients and that all communication occurs over authenticated channels. Furthermore, communication is assumed to be \emph{asynchronous}, i.e., every message is guaranteed to arrive at the targeted recipient but the delay until it is received is unbounded.
In other words, there is an \emph{adversary} with complete control over message scheduling, yet it cannot drop or alter messages of honest storage nodes and clients, nor create messages on their behalf. The adversary can adaptively corrupt nodes and clients during protocol execution, subject to the constraint that at most $t$ nodes may be corrupted. Any corrupted node is fully controlled by the adversary.

Naturally, it must be possible to store data in the storage network. To this end, a client executes a \emph{dispersal} algorithm to have some data item $m$ of size $|m|$ stored. Throughout this paper, $|m|$ denotes the worst-case (i.e., incompressible) size of $m$ in bits. Once the data item is stored, any client can execute a \emph{retrieval} algorithm to read $m$ from the storage network.
Each data item $m$ is assigned a unique identifier $id =$ \texttt{ID($m$)}, where \texttt{ID} is a function mapping from the message space to an identifier space with a negligible probability that any two data items can be found that map to the same identifier.
As mentioned before, the client may be Byzantine as well. A Byzantine client may disseminate invalid data in the storage network, which is acceptable assuming that the client has paid for the consumed resources beforehand.
If the stored data item is invalid, the result of retrieval with parameter $id$ must be $\bot$, indicating that no valid data item is stored for this identifier.

The conditions of AVID, which have been slightly rephrased from their original specification~\cite{cachin2005}, are captured formally in the following definition.

\begin{definition}[Asynchronous verifiable information dispersal (AVID)]\label{def:avid} In an asynchronous network of $n$ nodes where up to $t < n/3$ nodes may be Byzantine, the following conditions must hold for some parameter $\ell \le n-t$.
\begin{itemize}
 \item \textbf{Termination}: If an honest client $c$ disperses $m$, then all honest nodes eventually terminate the dispersal of $m$.
 \item \textbf{Agreement}: If some honest node terminates the dispersal of $m$, then all honest nodes eventually terminate the dispersal.
 \item \textbf{Availability}: If $\ell$ honest nodes terminate the dispersal of $m$, then any client can eventually retrieve some data item.
 \item \textbf{Correctness}: If $\ell$ honest nodes terminate the dispersal of $m$ and clients $c'$ and $c''$ retrieve $m'$ and $m''$, then $m' = m''$. If the writing client is honest, then $m = m' = m''$.
\end{itemize}
\end{definition}

The parameter $\ell$ is related to the storage overhead as well as the communication complexity. Our protocol in \S\ref{sec:avid} uses $\ell \coloneqq n-t$ as it minimizes the complexities but it can be made to work for smaller $\ell$ as well.
In short, the conditions state that if one honest node terminates the dispersal, then all other honest nodes terminate the dispersal, and all clients read consistent results (as soon as at least $\ell$ nodes have terminated the dispersal). Moreover, if the writing client $c$ is honest, it is further guaranteed that all reading clients read client $c$'s dispersed data item.

AVID protocols typically do not consider the fact that nodes must eventually be replaced in long-running systems. For any stored data item $m$, the corresponding state of a node $v$ can be recreated on a (new) node $v'$ by retrieving $m$ and then deriving the parts of $m$ that $v'$ must store, discarding the other parts. With this approach the entire state of the storage network must be sent to $v'$, which is practically infeasible.
Thus, a \emph{recovery algorithm} is needed that is efficient in the sense that the amount of transferred data exceeds the state size of $v$ merely by a small factor.

The quality of AVID protocols is measured primarily along two dimensions: how much data the storage nodes store and how much data must be transferred when reading or writing, or when recovering the state of a failed node. In this context, the \emph{space complexity} is formally defined as follows.

\begin{figure*}[t]
\vspace{.4cm}
    \centering\small
    \hspace{1.2em}%
    % --- Matrix A ---
    \begin{subfigure}[b]{0.20\textwidth}
        \centering
        \[A \coloneqq 
\begin{pNiceArray}{ccc}[margin]
  m_{1,1} & \cdots & m_{1,n-t} \\
  \vdots  & \ddots & \vdots  \\
  m_{n-t,1} & \cdots & m_{n-t,n-t} 
\CodeAfter
  \begin{tikzpicture}[
      back brace/.style={decorate, decoration={brace, amplitude=3pt}}]
    
    % Top brace: from top-left of (1,1) to top-right of (1,3)
    \draw [back brace] ([yshift=4pt]1-1.north west) -- ([yshift=4pt]1-3.north east) 
          node [midway, above=3pt] {$n-t$};
    
    % Right brace: from top-right of (1,3) to bottom-right of (3,3)
    \draw [back brace] ([xshift=15pt]1-3.north east) -- ([xshift=10pt]3-3.south east) 
          node [midway, right=3pt] {$n-t$};
          
  \end{tikzpicture}
\end{pNiceArray}
\]
    \end{subfigure}
    \hspace{6em}%
    % --- Matrix A' ---
    \begin{subfigure}[b]{0.20\textwidth}
        \centering
        \[
A' \coloneqq 
\begin{pNiceArray}{ccc}[margin]
  \NotEmpty & & \NotEmpty \\ % Row 1 (Top brace & side brace start)
  & A & \\
  &   & \NotEmpty \\ \Hline % Row 3 (Top side brace end)
  \NotEmpty & & \NotEmpty \\ % Row 4 (Bottom side brace start)
  & R_c & \NotEmpty \\ % Row 5 (Bottom side brace end)
\CodeAfter
  \begin{tikzpicture}[remember picture, overlay,
      back brace/.style={decorate, decoration={brace, amplitude=3pt}}]
    
    % Top brace: from 1-1 to 1-3
    \draw [back brace] ([yshift=4pt]1-1.north west) -- ([yshift=4pt]1-3.north east) 
          node [midway, above=3pt] {$n-t$};
    
    % Right brace Top: from 1-3 to 3-3
    \draw [back brace] ([xshift=10pt]1-3.north east) -- ([xshift=10pt]3-3.south east) 
          node [midway, right=4pt] {$n-t$};
    
    % Right brace Bottom: from 4-3 to 5-3
    \draw [back brace] ([xshift=10pt]4-3.north east) -- ([xshift=10pt]5-3.south east) 
          node [midway, right=4pt] {$t$};
          
  \end{tikzpicture}
\end{pNiceArray}
\]
    \end{subfigure}
    \hspace{2em}%
    % --- Matrix M ---
    \begin{subfigure}[b]{0.38\textwidth}
        \centering
        \[
M \coloneqq 
\begin{pNiceArray}{ccc|c|ccccc}[margin]
  % --- TOP BLOCK ---
  \NotEmpty & & \NotEmpty & \NotEmpty & \NotEmpty & & & & \NotEmpty \\ % Row 1
  & A & & ~ R_r & & & ~ S_{r} ~ & & \\                                  % Row 2
  \NotEmpty & & & & & & & & \NotEmpty \\ \Hline                        % Row 3
  % --- BOTTOM BLOCK ---
  \NotEmpty & & & & & & & & \NotEmpty     \\                           % Row 4
  & R_c & & R_{rc} & & & ~ S_{rc} ~ & & \NotEmpty                        % Row 5
\CodeAfter
  \begin{tikzpicture}[remember picture, overlay,
      back brace/.style={decorate, decoration={brace, amplitude=3pt}}]
    
    % --- TOP BRACES ---
    \draw [back brace] ([yshift=12pt]1-1.north west) -- ([yshift=12pt]1-3.north east) 
          node [midway, above=3pt] {$n-t$};
    
    \draw [back brace] ([yshift=12pt, xshift=-0.5em]1-4.north west) -- ([yshift=12pt, xshift=0.5em]1-4.north east) 
          node [midway, above=3pt] {$t$};
          
    \draw [back brace] ([yshift=12pt]1-5.north west) -- ([yshift=12pt]1-9.north east) 
          node [midway, above=3pt] {$n$};

    % --- RIGHT SIDE BRACES ---
    \draw [back brace] ([xshift=15pt]1-9.north east) -- ([xshift=15pt]3-9.south east) 
          node [midway, right=4pt] {$n-t$};
    
    \draw [back brace] ([xshift=15pt]4-9.north east) -- ([xshift=15pt]5-9.south east) 
          node [midway, right=4pt] {$t$};
          
  \end{tikzpicture}
\end{pNiceArray}
\]
    \end{subfigure}
%        \vspace{-.2cm}
		\caption{
		The data item $m$ is embedded in a square matrix $A$ with $n-t$ elements in each row and column.
		The columns of $A$ are $(n, n-t)$-erasure coded and then concatenated to create $A'$.
		The rows of $A'$ are $(2n, n-t)$-erasure coded and then stacked on each other to create the $n \times 2n$-matrix $M$ of size $2(n/(n-t))^2|m|$.}
		\label{fig:encoding}
		%\vspace{-.4cm}
\end{figure*}

\begin{definition}[Space complexity] For any data item $m$, the \emph{space complexity} is the maximum number of bits stored across all honest nodes to provide access to $m$ once dispersal has terminated.
\end{definition}

The space complexity is a function of the network size $n$, message size $|m|$, and cryptographic parameters specific to the protocol.
This definition does not take into account that the memory consumption may be higher during the execution of the dispersal algorithm because nodes may temporarily buffer received data that is not part of their long-term storage. This omission is justified in that the primary concern is to minimize the long-term storage consumption; moreover, the maximum amount of data buffered at any stage is upper bounded by the amount of data transmitted, which is captured in the following definition.

\begin{definition}[Communication complexity] For any operation, the \emph{communication complexity} is the maximum total number of bits transmitted by all honest nodes (and the initiating client, if any) to carry out said operation.
\end{definition}

As mentioned above, the relevant operations are dispersal, retrieval, and recovery.
For protocols where honest nodes that did not receive their share of data during the dispersal execute the recovery algorithm in order to participate in the retrieval (e.g., \cite{danezis2025}), the question is how to properly define the communication complexity of dispersal and retrieval.
Since these recovery operations occur only once during the first retrieval, we count the communication complexity of the required recovery operations as part of the dispersal. An example of this calculation is given in \S\ref{sec:related_work}.

In addition to the space and communication complexity, we also consider the \emph{round complexity}, i.e., the maximum number of communication rounds required to carry out each operation, where each round consists of the steps of sending messages, receiving messages of this round, and then carrying out some local computation.

The primary technique to keep these complexities low in all recent papers on AVID (and related problems such as reliable broadcast) is \emph{erasure coding}. An $(n, k)$-erasure code is used to encode any message $m$ into $n$ so-called \emph{fragments} $f_1, \ldots, f_n$ in such a way that any subset of $k$ fragments can be used to decode message $m$.
In this paper, the functions to encode data and decode fragments are called \texttt{encode} and \texttt{decode}, respectively.
Concretely, given a vector $m=(m_1,\ldots,m_k)$ of $k$ data elements, where each $m_i$ is a string of bits, and a parameter $n$, \texttt{encode}$(m,n)$ returns a vector $v'=(f_1,\ldots,f_n)$ of $n$ fragments.
The function \texttt{decode} takes any subset of $k$ distinct fragments $f_{i_1},\ldots, f_{i_k}$ as input and outputs a vector $(m_1,\ldots,m_k)$.
Without loss of generality, we can assume that the used erasure code is \emph{systematic}, which means that the output of the \texttt{encode} function contains the input vector. Concretely, we assume in the following that $m_i = f_i$ for all $i \in [k]$, where $[k] \coloneqq \{1,\ldots, k\}$. The other fragments $f_{k+1},\ldots, f_n$ are called \emph{recovery fragments}.
Moreover, we consider \emph{linear} erasure codes, where each encoded symbol is a linear combination of the original message symbols over a finite field. 
Given an input vector $m=(m_1,\ldots,m_k)$ of size $|m|$, the size of a fragment $f_i$ is $\max(|m|/k,\lceil\log(n)\rceil)$. The size is at least $\lceil\log(n)\rceil$ due to the fact that erasure coding works with a field of cardinality at least $n$. In the following, we assume that $|m|/k \ge \lceil\log(n)\rceil$ and thus $|f_i| = |m|/k$ for all $i \in [n]$. This assumption already holds for moderately large $|m|$ and practical values of $n$ but it can further be justified in that fragments are always accompanied by a corresponding commitment proof (discussed below) whose size exceeds the symbol size of erasure coding for known commitment schemes.

An asymmetric signature scheme is assumed to be in place and all exchanged messages are authenticated. The only other cryptographic tool that is required is a vector commitment scheme to prevent Byzantine nodes from disseminating invalid fragments, which would foil any attempt to reconstruct $m$. Concretely, we use vector commitments with the following properties. First, given a vector $d$ of data elements, any node can deterministically create a binding commitment $\mathcal{C}_d$ to $d$. Second, given vector $d$ and an index $i$, any node can generate a proof $\pi_d[i]$ attesting that $d[i]$ is the $i^\mathit{th}$ data element of $d$.
In other words, once a node has obtained the full vector, it can generate all proofs for data elements in vector $d$ itself.
Naturally, proofs can be evaluated against a commitment $\mathcal{C}_d$, verifying that the proof is valid for the committed vector $d$.
The \texttt{gen\_proofs} function used in the protocols performs $O(n\log n)$ hash evaluations for Merkle-tree commitments, and $O(n\log n)$ field operations for KZG commitments; this computational overhead is small for practical values of $n$ and can therefore be considered to have a negligible effect on the latency of all operations in a real-world deployment.
 
These concepts are extended to two dimensions in our constructions: data is erasure-coded into a matrix $M$, such that given sufficiently many (i.e., $n-t$) elements of a row or column of $M$, the entire row or column can be derived using the function \texttt{decode}.
As the entire protocol deals with matrix $M$, we unify the terms ``element'' and ``fragment'' and solely use ``element'' throughout the remainder of the paper.
In order to protect the integrity of $M$, a commitment $\mathcal{C}_M$ to $M$ and a matrix of proofs $\pi_M$ is created where $\pi_M[i,j]$ proves that $M[i,j]$ is the correct element at indices $i$ and $j$ under commitment $\mathcal{C}_M$. Given sufficiently many (i.e., $n-t$) elements of a row or column of $M$ and their proofs, the commitment scheme must support constructing proofs for the full row or column using the function \texttt{gen\_proofs}.
Such a commitment can be implemented, e.g., using a Merkle tree, where a set of specific inner nodes, one per height, constitutes a membership proof. Given that the height is logarithmic in $n$, the size $|\pi|$ of a proof $\pi$ is $\Theta(\kappa\log(n))$ where $\kappa$ is the output size of the used cryptographic hash function. Alternatively, a polynomial commitment scheme such as the KZG scheme~\cite{kate2010} can be used, where commitments and proofs correspond to a single group element. As mentioned above, since such commitment schemes operate over large groups, we assume that the commitment and proof sizes are bounded by $\Theta(\log(n))$.
In contrast to the approach using Merkle trees, this scheme requires a trusted setup and additional cryptographic assumptions. A detailed description of these schemes is considered out of scope, i.e., we simply assume that such a scheme is given.

% !TEX root = ./avid_spaa.tex

\section{Main Protocol}
\label{sec:avid}

\subsection{Overview}

The proposed AVID protocol, denoted by $\mathcal{P}$, consists of distributed algorithms for dispersal, retrieval, and recovery.
At a high level of abstraction, the data item $m$ is encoded into a matrix $M$ such that $m$ can be reconstructed when given sufficiently many row (column) elements for a certain number of rows (columns).
For all $i \in [n]$, each node $v_i$ is supposed to hold \emph{row data} $r_i \coloneqq M[i,1..n-t]$ and \emph{column data} $c_i \coloneqq M[1..n-t,i]$, i.e., the first $n-t$ elements of row $i$ and column $i$, respectively. Additionally, $v_i$ must store the corresponding membership proofs $\pi_i^r \coloneq \pi_M[i,1..n-t]$ and $\pi_i^c \coloneq \pi_M[1..n-t,i]$.
The dispersal algorithm disseminates this \emph{storage data} to all nodes, which constitutes a write operation, whereas the retrieval algorithm is executed to read data by gathering row data from the storage nodes and reconstructing $m$. The challenge is to define algorithms that require a minimal amount of data to be stored and transmitted for the operations defined in \S\ref{sec:model}.

\textbf{Encoding}:
The encoding of $m$ into an $n \times 2n$-matrix $M$, illustrated in Figure~\ref{fig:encoding}, starts by converting $m$ into an $(n-t) \times (n-t)$ matrix. This matrix is first expanded by applying an  $(n, n-t)$-erasure code column-wise and then by applying an $(2n, n-t)$-erasure code row-wise to obtain matrix $M$. % as illustrated in Figure~\ref{fig:encoding}.
Note that no node or client ever explicitly stores the entire matrix, and specific parts of $M$ are only required for certain operations.

\textbf{Dispersal}:
The dispersal algorithm ensures that every honest node $v_i$ stores its row and column data (including membership proofs).
To this end, the client sends the storage data to each node individually and, subsequently, transfers additional information to all nodes, which engage in a multi-round protocol to achieve a full dispersal despite Byzantine storage nodes or a Byzantine client $c$. Every element of $M$ may be transferred during dispersal, in particular elements in $S_{r}$ and $S_{rc}$ are \emph{only} used for this operation.

\begin{figure}[t]
  \centering
    \begin{subfigure}[c]{0.4\columnwidth}
        \centering
        \includegraphics[width=0.5\textwidth]{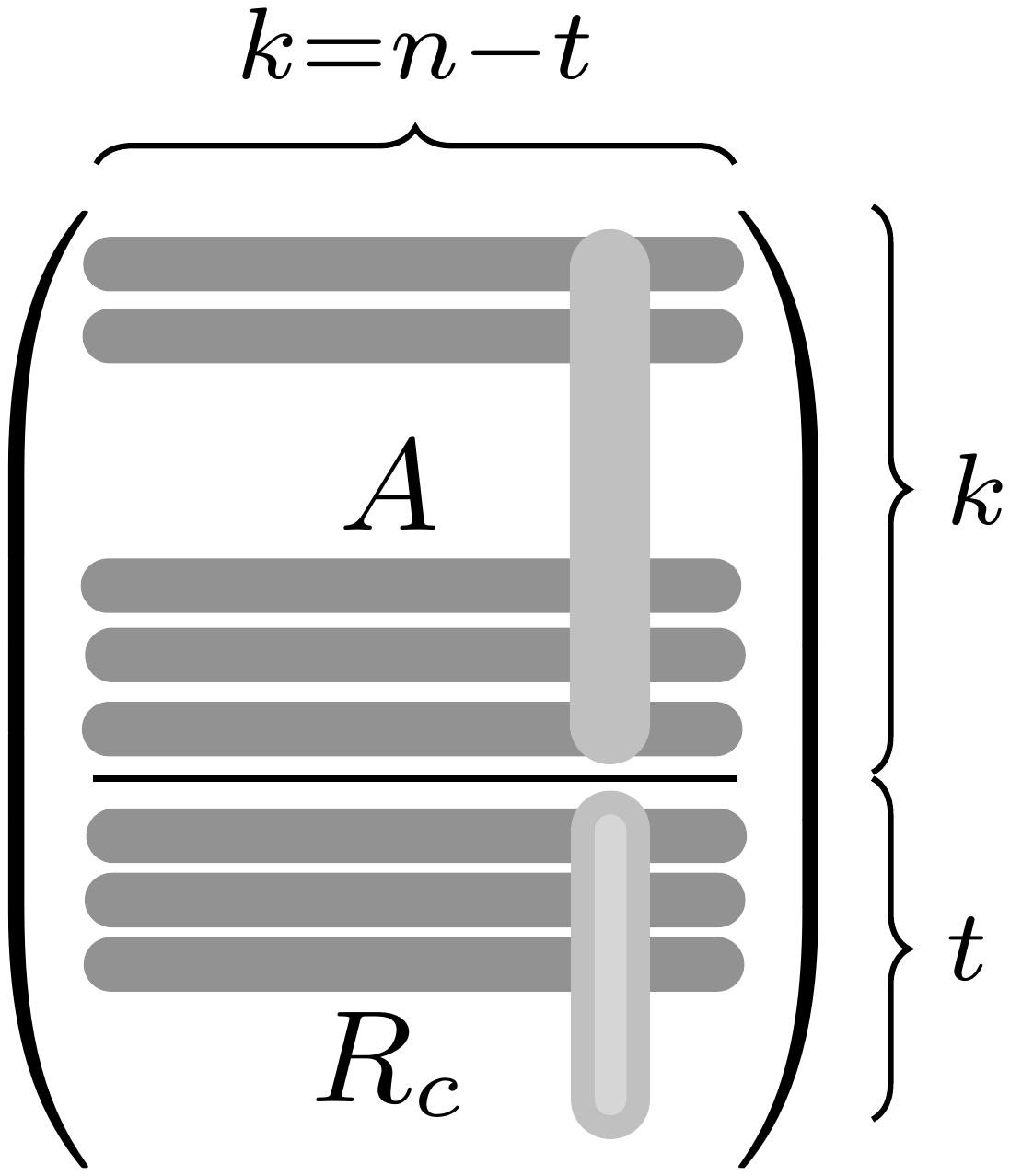}
        \\(a)
\end{subfigure}
    \hfill
    \begin{subfigure}[c]{0.5\columnwidth}
        \centering
       \includegraphics[width=0.80\textwidth]{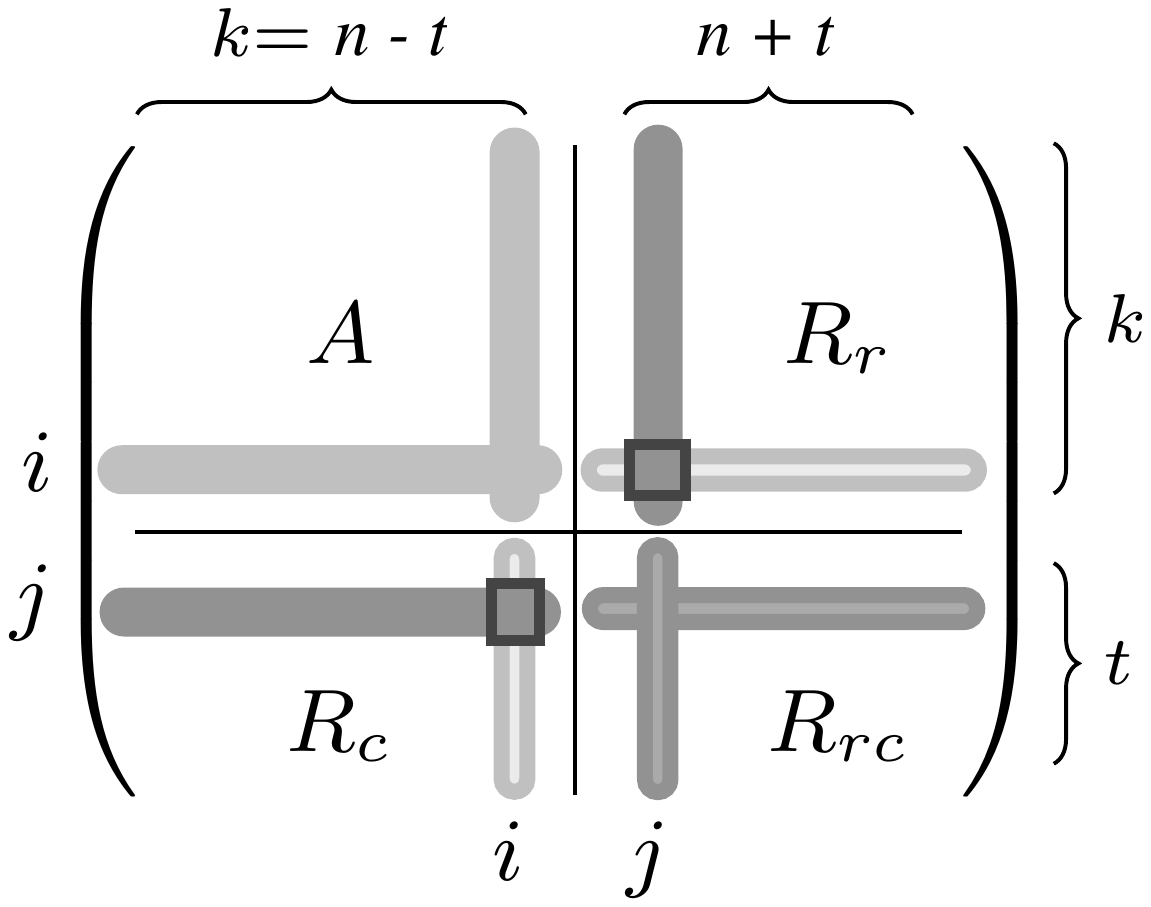}
               \\\vspace{-.1cm}(b)
\end{subfigure}\vspace{-.2cm}
\caption{
(a) Retrieval:
The client receives $n-t$ row elements from $n-t$ honest nodes, thus it has enough elements from all columns of $A'$ to decode them and reconstruct $m$.\newline
(b) Recovery: Node $v_j$ has an element in its column (row) that is in $v_i$'s row (column). Sending the common elements to the replacment node $v_i'$ enables $v_i'$ to decode the first $n-t$ elements of both its row and column.
}
\label{fig:algo}
\end{figure}

\textbf{Retrieval}:
When a client requests to read data item $m$, all honest nodes send their row data. The collection of this data enables the client to obtain $m$ by decoding the columns of $A'$ as illustrated in Figure~\ref{fig:algo}(a), i.e., only elements in $A$ and $R_c$ are used.

\textbf{Recovery}:
If node $v_i'$ is meant to replace (failed) node $v_i$, every honest node $v_j$ sends the $i^\mathit{th}$ element of its row and column data after an encoding step. When enough nodes reply with these pairs of elements, $v_i'$ can reconstruct its row and column data as depicted in Figure~\ref{fig:algo}(b), making use of elements in $A$, $R_r$, $R_c$, and $R_{rc}$.

The exchanged information is augmented with commitments to cope with Byzantine behavior.
Before describing the algorithms executed by the clients and storage nodes in more detail, the data encoding procedure is explained more formally.

\subsection{Data Encoding and Integrity}
The dispersal and retrieval algorithms require some helper functions for the encoding and decoding at the client and to ensure data integrity, which are collected in Algorithm~\ref{algo:helper}.
The function \texttt{create\_matrix} converts the data item $m$ into an $n \times 2n$-matrix as follows. First, $m$ is split into $(n-t)^2$ data elements $m_{i,j}$ for $i,j \in [n-t]$. These elements are then placed into an $(n-t) \times (n-t)$-matrix $A$.
Next, matrix $A$ is expanded in both dimensions by first adding $t$ recovery elements to every column and then extending every row with $n+t$ recovery elements as discussed in the previous section.
The operator $||$ in the pseudocode of \texttt{create\_matrix} concatenates the columns. Note that the input, i.e., the columns, as well as the output of \texttt{encode} are transposed because the input and output of \texttt{encode} are both defined as row data.

\begin{algorithm}[t]
\begin{algorithmic}[1]
\small
\caption{Client helper functions.}
    \label{algo:helper}
\Function{\texttt{create\_matrix}}{$m$}
\State Split $m$ into $(n-t)^2$ elements $m_{i,j}$ for $i,j \in [n-t]$
\State Place elements into $(n-t,n-t)$-matrix $A$, with $a_{i,j}  \coloneqq m_{i,j}$
\State // Add $t$ recovery elements to every column to create $A'$
\State $A' \coloneqq (\texttt{encode}(A[\cdot, 1]^T,n)^T ~||\ldots ||$ \texttt{encode}$(A[\cdot, n-t]^T,n)^T)$
\State // Add $n+t$ recovery elements to every row to create $M$
\State $M \coloneqq (\texttt{encode}(A'[1,\cdot],2n), \ldots,  \texttt{encode}(A'[n,\cdot],2n))$
\State \Return $M$
\EndFunction
\vspace{0.1cm}
%\State
\Function{\texttt{extract}}{$A$}
\State $m \coloneqq \bot$
\For {$i \in [n-t]$}
\For {$j \in [n-t]$}
\State $m \coloneqq m \;||\; A[i,j]$
\EndFor
\EndFor
\State \Return $m$
\EndFunction
\end{algorithmic}
\end{algorithm}

This encoding results in a matrix with $2n^2$ elements satisfying the property that any set of elements that contains at least $n-t$ rows with at least $n-t$ elements each, can be used to recover $m$ with row-wise decoding. 
Analogously, any set of at least $n-t$ columns with at least $n-t$ elements each can be used to recover $m$ with column-wise decoding, respectively.
Row-wise decoding consists of calling the \texttt{decode} function over rows first, followed by a call of the \texttt{decode} function over the outputs.
Analogously, column-wise decoding first decodes the columns and then the outputs.
By using a suitable generator matrix, we ensure that if we encode row data into $n$ elements instead of $2n$, the $i^\mathit{th}$ element is the same in both encodings for any $i \in [n]$.

The function \texttt{extract} reads the entries of $M$ at indices $i,j \in [n-t]$ and concatenates them to reconstruct $m$. Note that $M$ contains the parts of $m$ itself at these indices because the encoding is assumed to be systematic as defined in \S\ref{sec:model}.

\subsection{Dispersal}
\label{sec:dispersal}

The dispersal algorithm is initiated by a client with the aim to write data item $m$. First, the client computes the encoding matrix $M$ and a commitment and proofs for it. Then, the following high-level steps, depicted in Figure~\ref{fig:dispersal}, ensure that every honest node $v_i$ stores its storage data $(r_i, c_i, \pi_i^r, \pi_i^c$) at the end.
\begin{enumerate}
\item The client sends row and column data of $M$ to each node.
\item Nodes validate the data and return \emph{ack} messages.
\item The client initiates a reliable broadcast of the set $S$ of received \emph{ack} messages. Let $\bar{V}$ be the set of nodes from which no \emph{ack} message was received. The client further distributes \emph{spread} messages containing elements of the rows for nodes in $\bar{V}$.  
\item Nodes forward row elements to nodes in $\bar{V}$ in \emph{row\_info} messages and broadcast \emph{info\_sent} messages afterwards.
\item Nodes in $\bar{V}$ decode their row data if necessary. All nodes send
\emph{column\_info} messages containing column elements to the nodes in $\bar{V}$ so they can decode their column data. Lastly, \emph{ready} messages are used to trigger the termination of dispersal.
\end{enumerate}

\begin{figure}[t]
\centering
\footnotesize
% !TEX root =  ../avid_spaa.tex
\begin{tikzpicture}[scale=0.78]
	\begin{pgfonlayer}{nodelayer}
		\node [style=none, label={left:$c$}] (0) at (-33, 6) {};
		\node [style=none, label={left:$v_1$}] (1) at (-33, 5) {};
		\node [style=none, label={left:$v_2$}] (2) at (-33, 4) {};
		\node [style=none] (3) at (-33.5, 3) {...};
		\node [style=none, label={left:$v_{n-1}$}] (4) at (-33, 2) {};
		\node [style=none, label={left:$v_{n}$}] (5) at (-33, 1) {};
		\node [style=none] (6) at (-13, 6) {};
		%\node [style=none] (7) at (-9, 5) {};
		%\node [style=none] (8) at (-9, 4) {};
		%\node [style=none] (9) at (-9, 3) {};
		%\node [style=none] (10) at (-9, 2) {};
		%\node [style=none] (11) at (-9, 1) {};
		\node [style=none] (23) at (-31, 6.75) {$\mathit{disperse}$};
		\node [style=none] (24) at (-27, 6.75) {$\mathit{ack}$};
		\node [style=none] (25) at (-23, 7.25) {$\mathit{RB}$};
		\node [style=none] (30) at (-23, 6.5) {$\mathit{spread}$};
		\node [style=none] (31) at (-19, 7.25) {$\mathit{row\_info}$};
		\node [style=none] (32) at (-19, 6.5) {$\mathit{info\_sent}$};
		\node [style=none] (34) at (-15, 7.25) {$\mathit{col\_info}$};
		\node [style=none] (35) at (-15, 6.5) {$\mathit{ready}$};
%		\node [style=none] (36) at (-11, 6.75) {$\mathit{receipt}$};
		\node [style=none] (37) at (-29, 5) {};
		\node [style=none] (38) at (-29, 4) {};
		\node [style=none] (41) at (-25, 6) {};
		\node [style=none] (42) at (-23.75, 5) {};
		\node [style=none] (43) at (-23.75, 4) {};
		\node [style=none] (44) at (-23.75, 2) {};
		\node [style=none] (45) at (-23.75, 1) {};
		\node [style=none] (46) at (-21, 5) {};
		\node [style=none] (47) at (-21, 4) {};
		\node [style=none] (51) at (-17, 5) {};
		\node [style=none] (53) at (-17, 4) {};
		\node [style=none] (54) at (-18.5, 2) {};
		\node [style=none] (55) at (-17, 2) {};
		\node [style=none] (56) at (-18.5, 1) {};
		\node [style=none] (57) at (-17, 1) {};
		\node [style=none] (58) at (-14.5, 2) {};
		\node [style=none] (59) at (-15, 1) {};
		\node [style=none] (60) at (-13, 5) {};
		\node [style=none] (61) at (-13, 4) {};
		\node [style=none] (62) at (-13, 2) {};
		\node [style=none] (63) at (-13, 1) {};
		\node [style=none] (64) at (-17, 5) {};
		\node [style=none] (65) at (-17, 4) {};
		\node [style=none] (66) at (-17, 2) {};
		\node [style=none] (67) at (-17, 1) {};
		\node [style=none] (68) at (-16.25, 5) {};
		\node [style=none] (69) at (-15.75, 4) {};
		\node [style=none] (70) at (-16, 2) {};
		\node [style=none] (71) at (-16.25, 1) {};
		\node [style=none] (73) at (-29, 1) {};
		\node [style=none] (74) at (-29, 3.25) {};
		\node [style=none] (75) at (-28.75, 1.25) {};
		\node [style=none] (76) at (-29.25, 1.25) {};
		\node [style=none] (77) at (-29.25, 0.75) {};
		\node [style=none] (78) at (-28.75, 0.75) {};
		\node [style=none] (79) at (-29, 2) {};
		\node [style=none] (80) at (-28.75, 2.25) {};
		\node [style=none] (81) at (-29.25, 2.25) {};
		\node [style=none] (82) at (-29.25, 1.75) {};
		\node [style=none] (83) at (-28.75, 1.75) {};
		\node [style=none] (84) at (-21, 1) {};
		\node [style=none] (85) at (-20.75, 1.25) {};
		\node [style=none] (86) at (-21.25, 1.25) {};
		\node [style=none] (87) at (-21.25, 0.75) {};
		\node [style=none] (88) at (-20.75, 0.75) {};
		\node [style=none] (89) at (-21, 2) {};
		\node [style=none] (90) at (-20.75, 2.25) {};
		\node [style=none] (91) at (-21.25, 2.25) {};
		\node [style=none] (92) at (-21.25, 1.75) {};
		\node [style=none] (93) at (-20.75, 1.75) {};
		\node [style=none] (94) at (-29, 0.75) {};
		\node [style=none] (95) at (-29, 7) {};
		\node [style=none] (96) at (-25, 7) {};
		\node [style=none] (97) at (-25, 0.75) {};
		\node [style=none] (98) at (-21, 7) {};
		\node [style=none] (99) at (-21, 0.75) {};
		\node [style=none] (101) at (-17, 7) {};
		\node [style=none] (102) at (-17, 0.75) {};
		\node [style=none] (103) at (-17, 1) {};
		\node [style=none] (104) at (-13, 7) {};
		\node [style=none] (105) at (-13, 0.75) {};
		\node [style=none] (106) at (-22.5, 2) {};
		\node [style=none] (107) at (-22.5, 1) {};
		\node [style=none] (108) at (-22.5, 4) {};
		\node [style=none] (109) at (-22.5, 5) {};
	\end{pgfonlayer}
	\begin{pgfonlayer}{edgelayer}
		\draw [style=dotted] (0.center) to (6.center);%w line
		\draw [style=dotted] (1.center) to (60.center);%v1 line
		\draw [style=dotted] (2.center) to (61.center);% v2 line
		\draw [style=dotted] (4.center) to (62.center);%vn-1 line
		\draw [style=dotted] (5.center) to (63.center);
		\draw [style=data] (0.center) to (37.center);%disperse
		\draw [style=data] (0.center) to (38.center);%disperse
		\draw [style=control] (37.center) to (41.center);
		\draw [style=control] (38.center) to (41.center);
		\draw [style=data] (41.center) to (46.center);%spread
		\draw [style=data] (41.center) to (47.center);%spread
		\draw [style=control] (41.center) to (42.center);
		\draw [style=control] (41.center) to (43.center);
		\draw [style=control] (41.center) to (44.center);
		\draw [style=control] (41.center) to (45.center);
		\draw [style=data] (46.center) to (54.center);%row_info
		\draw [style=data] (47.center) to (56.center);%row_info
		\draw [style=control] (46.center) to (55.center);%info sent
		\draw [style=control] (47.center) to (57.center);%info sent
		\draw [style=control] (47.center) to (55.center);%info sent
		\draw [style=control] (47.center) to (51.center);%info sent
		\draw [style=control] (46.center) to (57.center);%info sent
		\draw [style=control] (46.center) to (53.center);%info sent
		\draw [style=data] (64.center) to (58.center);%col_info
		\draw [style=data] (64.center) to (59.center);%col_info
		\draw [style=data] (65.center) to (58.center);%col_info
		\draw [style=data] (65.center) to (59.center);%col_info
		\draw [style=data] (66.center) to (59.center);%col_info
		\draw [style=data] (67.center) to (58.center);%col_info
		\draw [style=control] (51.center) to (61.center);%ready
		\draw [style=control] (51.center) to (62.center);%ready
		\draw [style=control] (51.center) to (63.center);%ready
		\draw [style=control] (53.center) to (60.center);%ready
		\draw [style=control] (53.center) to (62.center);%ready
		\draw [style=control] (53.center) to (63.center);%ready
		\draw [style=control] (55.center) to (60.center);%ready
		\draw [style=control] (55.center) to (61.center);%ready
		\draw [style=control] (55.center) to (63.center);%ready
		\draw [style=control] (57.center) to (60.center);%ready
		\draw [style=control] (57.center) to (61.center);%ready
		\draw [style=control] (57.center) to (62.center);%ready
%		\draw [style=control, color=green] (60.center) to (6.center);%receipt
%		\draw [style=control, color=yellow] (61.center) to (6.center);%receipt
%		\draw [style=control, color=blue] (62.center) to (6.center);%receipt
%		\draw [style=control, color=red] (63.center) to (6.center);%receipt
		\draw [style=data] (47.center) to (54.center);%row info
		\draw [style=data] (46.center) to (56.center);% row info
		\draw [style=control] (70.center) to (61.center);
		\draw [style=control] (69.center) to (63.center);
		\draw [style=control] (74.center) to (41.center);
		\draw [style=data] (0.center) to (74.center);
		\draw [style=data fail] (0.center) to (73.center);
		\draw [style=cross] (76.center) to (78.center);
		\draw [style=cross] (77.center) to (75.center);
		\draw [style=cross] (81.center) to (83.center);
		\draw [style=cross] (82.center) to (80.center);
		\draw [style=data fail] (0.center) to (79.center);
		\draw [style=cross] (86.center) to (88.center);
		\draw [style=cross] (87.center) to (85.center);
		\draw [style=cross] (91.center) to (93.center);
		\draw [style=cross] (92.center) to (90.center);
		\draw [style=data fail] (41.center) to (89.center);
		\draw [style=data fail] (41.center) to (84.center);
		\draw [style=dotted] (95.center) to (94.center);%line disperse | ack
		\draw [style=dotted] (96.center) to (97.center);%line ack | RB
		\draw [style=dotted] (98.center) to (99.center);%line RB | row info
		\draw [style=dotted] (101.center) to (102.center);% line row info | ready
%		\draw [style=dotted] (104.center) to (105.center);
		\draw [style=control] (45.center) to (106.center);%RB
		\draw [style=control] (44.center) to (107.center);%RB
		\draw [style=control] (45.center) to (108.center);%RB
		\draw [style=control] (43.center) to (107.center);%RB
		\draw [style=control] (43.center) to (109.center);%RB
		\draw [style=control](42.center) to (106.center);%RB
		\draw [style=control](44.center) to (108.center);%RB
		\draw [style=control] (44.center) to (109.center);%RB
		\draw [style=control] (42.center) to (108.center);%RB
	\end{pgfonlayer}
\end{tikzpicture}
\vspace{-.8cm}
\caption{
	Dispersal algorithm to disseminate row and column data.
	Black, solid arrows and gray, dashed arrows represent data and control messages, respectively.
	When nodes obtain enough \emph{ready} messages, they are guaranteed to have their storage data or receive it eventually.
	The reliable broadcast (RB) step requires multiple communication rounds, and all steps may be interleaved due to asynchrony.
}\label{fig:dispersal}
\end{figure}

After step (2), the nodes in $V\setminus\bar{V}$, i.e., the nodes from which the client has received an \emph{ack} message, possess or can derive one element each of the row and column of every other node, in particular of the nodes in $\bar{V}$.
Concretely, node $v_k \in V\setminus\bar{V}$ can extract $M[j,k]$ (the $k^{th}$ element of $v_j$'s row) from its own column data and $M[k,j]$ (the $k^{th}$ element of $v_j$'s column) from its own row data.
This is the key insight behind the improved communication complexity: because each node in $V\setminus\bar{V}$ already carries one element of every other node's row and one element of every other node's column, the client does \emph{not} need to retransmit that data. It only needs to supply the \emph{second half} of the row (the recovery elements beyond position $n$) for each of the $t$ nodes in $\bar{V}$.
The nodes in $V\setminus\bar{V}$ can use this data to help the nodes in $\bar{V}$ obtain their row and column data.
However, only $n-2t$ of the nodes in $V\setminus\bar{V}$ may be honest, so this data is not sufficient to ensure that all honest nodes can reconstruct their storage data.
Therefore, the client sends additional row elements to the nodes, which they then forward to the nodes in $\bar{V}$.
A Bracha-style broadcast is used with \emph{info\_sent} and \emph{ready} messages in steps (3-5) to successfully conclude the dispersal. Note that all steps require one communication round except for the reliable broadcast of set $S$, which needs two or more rounds.
The concrete steps at the client and the storage nodes are discussed next.

\subsubsection{Dispersal at Clients}

The actions of client $c$ for dispersals are shown in Algorithm~\ref{algo:client_disperse}.
When dispersing data item $m$, the first step is to construct the $n \times 2n$-matrix $M$ using the function \texttt{create\_matrix}.
Next, the client creates a commitment $\mathcal{C}_M$ for the matrix and proofs $\pi_M $ for all rows and columns.
The identifier for $m$ is defined as $id \coloneqq \mathcal{C}_M$, i.e., the commitment to $M$. This identifier is added as a parameter to all messages so that each message can be uniquely associated with a specific execution of the algorithm. Furthermore, we assume that every node always signs its messages of any type.

\begin{algorithm}[t]
\begin{algorithmic}[1]
\small
\caption{Dispersal at client $c$ of data item $m$ with identifier $id$.}
    \label{algo:client_disperse}
%\Function{\texttt{disperse}}{$m$}
\State $M \coloneqq$ \texttt{create\_matrix}($m$)
\State $(\mathcal{C}_M, \pi_M) \coloneqq$ \texttt{commitment\_proofs}($M$)
\State $id \coloneqq$ $\mathcal{C}_M$
\For {$i \in [n]$}
\State // Information to be stored by node $v_i$
\State $(r_i, c_i) \coloneqq (M[i,1..n-t], M[1..n-t,i])$
\State $(\pi^r_{i}, \pi^c_{i}) \coloneqq (\pi_M[i,1..n-t], \pi_M[1..n-t,i])$
\State \textbf{send} \emph{disperse}$(id, r_i, c_i,\pi^r_{i},\pi^c_{i})$ to $v_i$
\EndFor
\Upon {received set $S$ of $n-t$ valid \emph{ack} messages for $id$}
\State \textbf{reliable-broadcast} \emph{stored}$(id, S)$
\For {$i \in [n]$}
\State $payload_i \coloneqq \{ (j, M[j,i+n], \pi_M[j,i+n]) |  j: (v_j,\_) \notin S\}$
\State \textbf{send} \emph{spread}($id, payload_i$) to $v_i$
\EndFor
\EndUpon
%\EndFunction
\end{algorithmic}
\end{algorithm}

The client sends $(r_i, c_i, \pi_i^r, \pi_i^c)$ directly to each node $v_i$ in a \emph{disperse} message and waits for the corresponding \emph{ack} messages confirming the receipt of the tuple with a signature. As there may be up to $t$ Byzantine failures, client $c$ merely waits for $n-t$ valid \emph{ack} messages. Each such message contains a pair $(v_i, \sigma_i)$, where $\sigma_i$ denotes $v_i$'s signature over its own identifier and $id$, confirming that $v_i$ indeed sent this \emph{ack} message.
In the following, $S$ denotes the set containing the received $(v_i, \sigma_i)$ pairs from $n-t$ distinct nodes.

Since client $c$ must ensure that all honest nodes receive their data, additional steps
are needed, which are executed as soon as $n-t$ signed \emph{ack} messages have been received. Note that the \textbf{upon} clause in Algorithm~\ref{algo:client_disperse} and the \textbf{upon} clauses in all subsequent algorithms are always executed once for every execution of the algorithm as soon as the required conditions are met.
First, client $c$ uses a reliable-broadcast protocol to disseminate the set $S$ in a \emph{stored} message, which ensures that all honest nodes eventually learn which $n-t$ nodes claim to have received their data.
Next, client $c$ generates the payload of a dedicated \emph{spread} message per node. Specifically, the \emph{spread} message for node $v_i$ contains $M[j, i+n]$, and the corresponding proofs  $\pi_M[j, i+n]$ for all $j$ where $(v_j, \_) \notin S$. The dissemination of these messages concludes the dispersal algorithm for the client.

\subsubsection{Dispersal at Storage Nodes}

The storage nodes participate in the dispersal operation as summarized in Algorithm~\ref{algo:disperse}.
When a node $v_k$ receives a \emph{disperse} message containing
a tuple $(r_k,c_k,\pi_k^r,\pi_k^c)$, it verifies its validity, i.e., the row and column data are consistent with the proofs under commitment $\mathcal{C}_M$. If this is the case, the tuple is stored and an \emph{ack}($id$) message is sent to the client.

Recall that an honest client reliably broadcasts a set $S$ containing node identifiers and signatures after collecting $n-t$ valid \emph{ack} messages. When node $v_k$ receives $S$, $v_k$ determines its validity by verifying all contained signatures and then stores it locally. Note that if an honest node stores $S$, all other honest nodes will eventually store exactly the same set due to the properties of reliable broadcast.
When node $v_k$ receives a \emph{spread} message, the message is first validated: It is considered valid if it contains $t$ tuples $(j, M[j,k+n], \pi_M[j,k+n])$ for distinct indices $j$, where each element $M[j,k+n]$ is consistent with the proof $\pi_M[j,k+n])$. 
Moreover, for each index $j$ it holds that $(v_j,\_) \notin S$. The \emph{spread} message is only evaluated when $S \ne \bot$ and $\texttt{get\_column\_data}(id) \neq \bot$, i.e., the reliable-broadcast message and the column data have been received.
Next, node $v_k$ encodes its stored column data into a vector $c_k'$ of length $n$ and then sends $c_k'[j]$ (which corresponds to $M[j,k]$), $M[j,k+n]$ from the spread message, and the proofs for both elements to $v_j$ in a \emph{row\_info} message. Node $v_k$ then broadcasts \emph{info\_sent}$(id)$ to inform all nodes about the fact that it sent row information to the $t$ nodes that may not have received their data yet.

\begin{algorithm}[t]
\begin{algorithmic}[1]
\small
\caption{Dispersal at node $v_k$ for $id=$\texttt{ID($m$)}. Initially, $S = \bot$.}
    \label{algo:disperse}
\Upon{received valid \emph{disperse}$(id, r_k, c_k, \pi^r_{k}, \pi^c_{k})$}
\State \texttt{set\_storage\_data}$(id, r_k, c_k, \pi^r_{k}, \pi^c_{k})$
\State \textbf{send} \emph{ack}$(id)$ to client
\EndUpon
%\State
\Upon{received valid \emph{stored}$(id, S')$}
\State $S \coloneqq S'$
\EndUpon
%\State
\Upon{received valid \emph{spread}$(id, payload_k)$ \textbf{and} $S \ne \bot$  \textbf{and} \Statex \hspace{4.5pt} $(c_k, \pi^c_{k}) \coloneq\texttt{get\_column\_data}(id) \neq \bot$}
\State $c_k' \coloneqq \texttt{encode}(c_k, n)$
\State ${\pi^c_k}' \coloneqq $\texttt{gen\_proofs}$(c_k, {\pi^c_k})$
\For {$(j, M[j,k+n], \pi_M[j,k+n]) \in payload_k$ and $(v_j, \_) \notin S$}
\State $row\_payload_j = ({c_k}'[j], M[j,k+n], {\pi^c_k}'[j], \pi_M[j,k+n])$
\State \textbf{send} \emph{row\_info}$(id, row\_payload_j)$ to $v_j$
\EndFor
\State \textbf{broadcast} \emph{info\_sent}$(id)$
\EndUpon
%\State
\Upon{received $n-2t$ valid \emph{row\_info} messages \textbf{and} \Statex \hspace{4.5pt} \texttt{get\_row\_data}$(id) = \bot$}
\State $r_k \coloneqq \texttt{decode}(F)$, for set $F$ of row elements in \emph{row\_info} messages
\State ${\pi^r_k}' \coloneqq \texttt{gen\_proofs}(r_k,\Pi)$, for set $\Pi$ of proofs in \emph{row\_info} messages
\State \texttt{set\_row\_data}$(id, r_k, {\pi^r_k}'[1..n-t])$
\EndUpon
%\State
\Upon{$(r_k, \pi^r_{k}) \coloneqq$ \texttt{get\_row\_data}$(id) \ne \bot$ \textbf{and} $S \ne \bot$}
\State $r_k' \coloneqq \texttt{encode}(r_k, n)$
\State ${\pi^r_k}' \coloneqq \texttt{gen\_proofs}(r_k,\pi^r_k)$
\For {$j \in [n]$ where $(v_j, \_) \notin S$}
\State \textbf{send} \emph{column\_info}$(id, r_k'[j],{\pi_k^r}'[j])$ to $v_j$
\EndFor
\EndUpon
%\State
\Upon{received set $C, \Pi$ of matrix elements and their proofs from $n-t$ \Statex \hspace{4.5pt} valid \emph{column\_info} messages  \textbf{and} \texttt{get\_column\_data}$(id) = \bot$}
\State $c_k \coloneqq \texttt{decode}(C)$, for set $C$
\State ${\pi^c_k}' \coloneqq$ \texttt{gen\_proofs}$(c_k, \Pi)$, for set $\Pi$
\State \texttt{set\_column\_data}$(id, c_k, {\pi^c_k}'[1..n-t])$
\EndUpon
%\State
\Upon {received $n-t$ \emph{info\_sent} \textbf{or} $t+1$ \emph{ready} messages}
\State \textbf{broadcast} \emph{ready}$(id)$
\EndUpon
%\State
\Upon {received $n-t$ \emph{ready} messages \textbf{and} \texttt{get\_storage\_data}($id$) $\ne \bot$ \Statex \hspace{4.5pt} \textbf{and} \emph{column\_info} sent}
\State terminate dispersal
\EndUpon
\end{algorithmic}
\end{algorithm}

If $v_k$ does not have its row data $(r_k, \pi_k^r)$, i.e., \texttt{get\_row\_data($id$) $= \bot$}, $v_k$
can construct it after receiving $n-2t$ valid \emph{row\_info} messages, where a \emph{row\_info} message is valid when the contained elements and proofs are consistent. The \texttt{decode} function is used to obtain the vector $r_k$ and the $\texttt{gen\_proofs}$ function is used to construct the vector ${\pi_k^r}' \coloneqq \pi_M[k,1..n]$ of $n$ proofs. The derived pair $(r_k, {\pi_k^r}'[1..n-t])$ is then stored using the function \texttt{set\_row\_data}.

Once the row data and the set $S$ are available, $v_k$ can encode $r_k$ using an $(n, n-t)$-erasure code to obtain the extended vector $r_k'$ of $n$ elements, corresponding to $M[k, 1..n]$. Node $v_k$ then sends $r_k'[j]$ together with the corresponding proof $\pi_k^r[j]$ to $v_j$ in a \emph{column\_info} message for each $v_j$ that does not occur in the set $S$.
As the name suggests, the purpose of this message is to provide elements that can be used to recover a column of matrix $M$: Once $n-t$ valid \emph{column\_info} messages have been received, node $v_k$ can derive column $c_k$ by applying the \texttt{decode} function to the set $C$ of received column elements. The proofs ${\pi_k^c}'$ for all $n$ column elements are then obtained by applying the function \texttt{gen\_proofs}. The column data $c_k$ and the corresponding proofs ${\pi_k^c}'[1..n-t]$ are then stored using the function \texttt{set\_column\_data}.

Node $v_k$ waits for the confirmation that all honest nodes will eventually receive their storage data. To this end, $v_k$ collects \emph{info\_sent} messages. Once it receives at least $n-t$ such messages and validates their signatures successfully, it broadcasts a \emph{ready} message. Alternatively, node $v_k$ also broadcasts a \emph{ready} message if it receives at least $t+1$ \emph{ready} messages itself.
When $v_k$ has received $n-t$ valid \emph{ready} messages, its storage data $(r_k, c_k, \pi_k^r, \pi_k^c)$ is available, and it has sent its \emph{column\_info} messages, then $v_k$ terminates the dispersal.

\subsection{Retrieval}

\subsubsection{Retrieval at Clients}

When retrieving a data item for identifier $id$, the client sends a \emph{retrieve} message containing $id$ to all nodes and waits for $n-t$ valid \emph{response} messages for $id$ from distinct nodes. 
A response is valid if it is signed correctly, it contains a row of $n-t$ elements, and also valid proofs for each of these elements.

As depicted in Figure~\ref{fig:algo}(a), the client can decode column $i$ of matrix $A$ for every $i\in[n-t]$, containing all parts of data item $m$.
Applying the $(n, n-t)$-erasure-encoding function to matrix $A$ first column-wise and then row-wise as in the \texttt{create\_matrix} function yields matrix $M$, which is used to verify the integrity of all data by computing the commitment $\mathcal{C}_M$ and verifying that $\mathcal{C}_M = id$. If this is the case, the data item $m$ is extracted from $A$ and returned. Otherwise, $\bot$ is delivered to indicate that invalid data is stored under the identifier $id$.

\begin{algorithm}[t]
\begin{algorithmic}[1]
\small
\caption{Retrieval at client $c$ for identifier $id$.}
    \label{algo:client_retrieve}
\For {$i \in [n]$}
\State \textbf{send} \emph{retrieve}($id$) to $v_i$
\EndFor
\State \textbf{wait} until received $n-t$ valid responses $\neq \bot$ for $id$
\State Create $(n-t) \times (n-t)$ matrix $R$ out of the received rows
\State $A \coloneqq (\texttt{decode}(R[\cdot, 1]^T)^T \;||\; \ldots \;||\; \texttt{decode}(R[\cdot, n-t]^T)^T)$
\State $A' \coloneqq (\texttt{encode}(A[\cdot, 1]^T, n)^T ~|| \ldots || \texttt{encode}(A[\cdot, n-t]^T, n)^T)$
\State $M \coloneqq (\texttt{encode}(A'[1,\cdot],2n), \ldots, \texttt{encode}(A'[n-t,\cdot],2n))$
\State $(\mathcal{C}_M, \pi_M) \coloneqq$ \texttt{generate}($M$)
\If {$\mathcal{C}_M = id$}
\State $m \coloneqq$ \texttt{extract}($A$)
\State \texttt{deliver}($m$)
\Else
\State \texttt{deliver}($\bot$)
\EndIf
\end{algorithmic}
\end{algorithm}

Constructing the entire matrix $M$ to recompute the commitment for each retrieval may be a burden on lightweight clients. It is possible to circumvent this computation by making the dispersal operation \emph{robust}~\cite{alhaddad2024}, which means that the storage nodes themselves verify that the reconstruction of the data item will succeed. In this case, when retrieving row data from $n-t$ distinct nodes, it suffices to generate $A$ and extract $m$ as the client must have received row data from $n-2t > t$ honest nodes, which implicitly vouch for the validity of the encoding by returning their row data.
This verification can be implemented using fast Reed-Solomon interactive oracle proofs of proximity (FRI)~\cite{ben2018} or inner-product arguments (IPA)~\cite{bootle2016,bunz2018} to prove that each row and column of $M$ corresponds to a polynomial of degree $n-t-1$ (or less), ensuring that $m$ can be uniquely decoded. 
However, these proofs are much more involved and it has yet to be demonstrated that such an approach can reduce the computational overhead in practice. For this reason and also because the addition of robustness would distract from the contributions of this work, we merely state the possibility of introducing such an optimization and omit further details.

\subsubsection{Retrieval at Storage Nodes}

When $v_k$ receives a \emph{retrieve} message from a client, $v_k$
checks whether storage data is available for the requested $id$.
If this is the case, the row data and the corresponding proof data is returned to the client, otherwise $\bot$ is returned.

A client may request the data item before it has been (fully) dispersed. If fewer than $n-t$ honest nodes have obtained their row data, it is possible that the client will not receive sufficiently many rows to reconstruct the corresponding data item $m$.
This is an acceptable outcome as the availability condition of AVID (see Definition~\ref{def:avid}) states that a data item must be received if at least $\ell$, in our case $\ell = n-t$, honest nodes have terminated the dispersal, but not before.
In other words, it is the client's responsibility to drop the retrieval request or resubmit it after a timeout.

\begin{algorithm}[t]
\begin{algorithmic}[1]
\small
\caption{Retrieval at node $v_k$ from client $c$ for identifier $id$.}
    \label{algo:retrieve}
\If {$(r_k, \pi^r_k) \coloneqq$ \texttt{get\_row\_data}$(id) ~\neq \bot$}
\State \textbf{send} \emph{response}($id, r_k, \pi^r_k$) to $c$
\Else
\State \textbf{send} \emph{response}($id,\bot$) to $c$
\EndIf
\end{algorithmic}
\end{algorithm}

\subsection{Recovery}

If a node $v_i$ crashes or leaves the storage network for other reasons, a replacement node $v_i'$ must be instantiated with node $v_i$'s state. In other words, for each data item $m$, it must obtain the $i^\mathit{th}$ row and column data derived from the corresponding matrix $M$. Note that, due to Lemma~\ref{lemma:avid_agreement}, once any honest node terminates the dispersal, all honest nodes will eventually do so; recovery may therefore simply be delayed until the required $n-t$ nodes have their storage data.
%Mechanisms to detect failures and setting up replacement nodes are discussed in Section~\ref{sec:considerations}.
We assume that a mechanism to detect failures and setting up replacement nodes is in place, which triggers the execution of the \texttt{recovery} algorithm in Algorithm~\ref{algo:recover}.

Since the steps are identical for each data item $m$, we focus on a particular $m$ with identifier $id$ and the corresponding storage data $(r_i, c_i, \pi^r_{i}, \pi^c_{i})$ that $v_i'$ is supposed to store.
If node $v_k$ is invoked to help the replacement node $v_i'$ obtain the required state, it loads its storage data $(r_k, c_k, \pi^r_{k}, \pi^c_{k})$. As stated before, $v_k$ will eventually get this data if there is another honest node that receives it as well. Next, $v_k$ encodes $r_k$ into a row $r'_k$ of $n$ elements. This procedure is repeated for $c_k$ to obtain $c'_k$.
Node $v_k$ then sends a \emph{recover} message with the $i^\mathit{th}$ element of the extended row and column to $v_i'$, together with the corresponding commitment proofs. Note the order of elements as the row (column) element of $v_k$ corresponds to a column (row) element of $v_i$.

Node $v_i'$ eventually receives such a message from at least $n-t$ honest nodes and, after validating the message signatures and commitment proofs, it has $n-t$ elements of the $i^\mathit{th}$ row and column. This is sufficient to decode them, resulting in the first $n-t$ elements of the $i^\mathit{th}$ row and column, i.e., $r_i$ and $c_i$.
As far as the commitment proofs are concerned, $v_i'$ only receives $n-t$ proofs instead of $n$. However, having both $r_i$ and $c_i$, plus $n-t$ proofs each for elements in its row and column, the function \texttt{gen\_proofs} can be used again to generate the missing commitment proofs. Thus, $v_i'$ obtains $(r_i, c_i, \pi^r_{i}, \pi^c_{i})$ as desired.

It is worth noting that if there are $t' \le t$ failures at the same time, $v_i'$ will still receive at least $n-t$ \emph{recover} messages due to the fact that we assume that at least $n-t$ nodes neither crash nor deviate from the protocol as defined in \S\ref{sec:model}.

\subsection{Analysis}
\label{sec:analysis}

\subsubsection{Correctness}

First, the correctness of protocol $\mathcal{P}$ is proven in the sense that it has all AVID properties specified in Definition~
\ref{def:avid}.

\begin{algorithm}[t]
\begin{algorithmic}[1]
\small
\caption{Recovery at node $v_k$ for $id$ from $v_i$ to $v_i'$.}
    \label{algo:recover}
\If {$v_i' \neq v_k$}
\State \textbf{wait} until $(r_k, c_k, \pi^r_k, \pi^c_k) \coloneqq$ \texttt{get\_storage\_data}$(id) ~\neq \bot$
\State $r_k' \coloneqq \texttt{encode}(r_k, n)$, $c_k' \coloneqq \texttt{encode}(c_k, n)$
\State ${\pi^r_k}' \coloneqq \texttt{gen\_proofs}(r_k,{\pi^r_k})$, ${\pi^c_k}' \coloneqq \texttt{gen\_proofs}(c_k, {\pi^c_k})$
\State // Extract the elements for $i$ with rows/columns swapped
\State \textbf{send} \emph{recover}($id, c_k'[i], r_k'[i], {\pi^c_k}'[i], {\pi^r_k}'[i]$) to $v_i'$
\Else
\State \textbf{upon} received set $E$ of $n-t$ valid \emph{recover} messages \textbf{do}
\State $r_i \coloneqq $ \texttt{decode}($R$), for set $R$ of $1^{st}$ items of messages in $E$
\State $c_i \coloneqq $ \texttt{decode}($C$), for set $C$ of $2^{nd}$ items of messages in $E$
\State ${\pi_i^r}' \coloneqq $ \texttt{gen\_proofs}($r, \Pi_r$), for set $\Pi_r$ of $3^{rd}$ items of messages in $E$
\State ${\pi_i^c}' \coloneqq $ \texttt{gen\_proofs}($c, \Pi_c$), for  set $\Pi_c$ of $4^{th}$ items of messages in $E$
\State \texttt{set\_storage\_data}$(id, r_i, c_i, {\pi_i^r}'[1..n-t], {\pi_i^c}'[1..n-t])$
\EndIf
\end{algorithmic}
\end{algorithm}
\begin{lemma}[Termination]\label{lemma:avid_termination} If an honest client disperses $m$, then all honest nodes successfully terminate the dispersal of $m$.
\end{lemma}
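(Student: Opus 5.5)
To prove the lemma, I will follow the dispersal of an honest client through the steps of Algorithm~\ref{algo:client_disperse} and Algorithm~\ref{algo:disperse}. For each \textbf{upon} clause, I show that its trigger is eventually satisfied at every honest node. I then check the three conditions for termination: the node receives $n-t$ \emph{ready} messages, it holds its storage data, and it has sent its \emph{column\_info} messages.

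\textbf{Step 1.} All $n-t$ or more honest nodes receive valid \emph{disperse} messages and reply with \emph{ack}. The client therefore eventually collects a set $S$ of $n-t$ valid acks and reliably broadcasts \emph{stored}$(id,S)$. By the properties of reliable broadcast, every honest node eventually sets $S$, and all honest nodes hold the same set. Let $\bar V$ be the nodes absent from $S$; note $|\bar V| = t$. Every honest node in $V\setminus\bar V$ has already stored its data. The client also sends each $v_i$ a valid \emph{spread} message containing $M[j,i+n]$ for all $j\in\bar V$.

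\textbf{Step 2.} Every honest node in $V\setminus\bar V$ has both $S$ and its column data, so it processes the \emph{spread} message. It sends \emph{row\_info} to each $v_j\in\bar V$ and broadcasts \emph{info\_sent}. There are at least $n-2t$ honest nodes in $V\setminus\bar V$.

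\textbf{Step 3: honest nodes in $\bar V$ obtain their row data.} This is the main obstacle, because an honest node in $\bar V$ cannot process its \emph{spread} message until it holds column data. Its row data must therefore come only from nodes in $V\setminus\bar V$. Each honest $v_k\in V\setminus\bar V$ supplies two valid elements of row $j$: $M[j,k]$ and $M[j,k+n]$. These come from at least $n-2t$ distinct senders and lie at $2(n-2t)$ distinct positions. Since $n\ge 3t+1$, we have $2(n-2t)\ge n-t$, so the $(2n,n-t)$ row code allows $v_j$ to decode $r_j$. This is exactly why the client supplies the second half of the rows.

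Here I must be careful about what the clause "received $n-2t$ valid \emph{row\_info} messages" guarantees. A message containing a Byzantine or duplicate element fails its proof check under $\mathcal C_M$, so it is not counted as valid. Thus any $n-2t$ valid messages come from distinct senders, carry distinct column indices $k$ and $k+n$, and provide $2(n-2t)\ge n-t$ correct elements. Binding of the commitment together with the honest client's consistent encoding then gives the correct $r_j$.

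\textbf{Step 4: honest nodes in $\bar V$ obtain their column data.} Once every honest node has its row data and $S$, it sends \emph{column\_info} to each node in $\bar V$. Each such $v_j$ therefore receives at least $n-t$ valid elements of column $j$ and decodes $c_j$. After that, $v_j$ also processes its \emph{spread} message and broadcasts \emph{info\_sent}. So every honest node eventually holds its storage data and has sent its \emph{column\_info} messages.

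\textbf{Step 5: the ready phase.} Every honest node (all $n-t$ or more of them) broadcasts \emph{info\_sent}. Each honest node therefore receives $n-t$ \emph{info\_sent} messages and broadcasts \emph{ready}. Consequently each honest node receives $n-t$ \emph{ready} messages. Combined with Step 4, all three termination conditions hold, and every honest node terminates.
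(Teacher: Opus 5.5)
Your proof is correct, but Steps~3 and~4 take a detour that the paper's proof does not need.

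As you note yourself in Step~1, the honest client sends a \emph{disperse} message to every node. Since every message is eventually delivered, every honest node---including the honest nodes in $\bar V$---eventually holds its row and column data straight from the client. The paper uses exactly this: with storage data at every honest node, each one processes its \emph{spread} message, sends its \emph{row\_info}, \emph{column\_info} and \emph{info\_sent} messages, and the \emph{ready} phase completes.

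So the motivating claim in Step~3, that the row data of an honest node in $\bar V$ ``must come only from nodes in $V\setminus\bar V$'', is inaccurate: it contradicts your own Step~1. It is also not really the main obstacle for termination. Reconstructing the row from $2(n-2t)\ge n-t$ \emph{row\_info} elements and then the column from $n-t$ \emph{column\_info} elements is the argument the paper reserves for the Agreement lemma, where a Byzantine client may never send nodes outside $S$ their \emph{disperse} message.

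Your reconstruction argument is nonetheless sound and non-circular, since nodes in $V\setminus\bar V$ rely only on their own \emph{disperse} data. It therefore proves something slightly stronger: honest nodes in $\bar V$ reach termination without ever needing their own \emph{disperse} message, even if it is arbitrarily slow. The paper's route is shorter because it relies directly on the honest client delivering to every node.
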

\begin{proof}
Assume that honest client $c$ disperses data item $m$. Since $c$ is honest, all commitments and proofs that it provides are correct and therefore all messages that $c$ and honest nodes send are valid.
According to Algorithm~\ref{algo:client_disperse}, $v_k$ sends a \emph{disperse} message to each node. Thus, every honest node receives its row and column data eventually and returns an \emph{ack} message. As the client gets at least $n-t$ \emph{ack} messages, it reliably broadcasts the set $S$ containing information about the $n-t$ nodes from which these messages have been received and further sends \emph{spread} messages to all nodes, causing all honest nodes to send \emph{row\_info}, \emph{column\_info}, and \emph{info\_sent} messages. Hence, all honest nodes eventually receive at least $n-t$ \emph{info\_sent} messages and broadcast \emph{ready} messages.
Since every honest node eventually receives $n-t$ \emph{ready} messages as well as its storage data (i.e., \texttt{get\_storage\_data}$(id) \ne \bot$), and sends \emph{column\_info} messages, every honest node terminates the dispersal.
\end{proof}

If the client is Byzantine, it is possible that not all honest nodes obtain a \emph{disperse} message containing their row data. As the following lemma shows, an honest node terminating the dispersal implies that there must be an honest node that learns that at least $n-t$
nodes claim to have sent their \emph{spread} messages.

\begin{lemma}\label{lemma:info_sent} If an honest node terminates the dispersal, there is an honest node that has received at least $n-t$ valid \emph{info\_sent} messages.
\end{lemma}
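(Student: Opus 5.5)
Lemma~\ref{lemma:info_sent} follows from the standard Bracha-style counting argument applied to the \emph{ready} messages. Suppose an honest node $v$ terminates the dispersal. By Algorithm~\ref{algo:disperse}, $v$ must have received $n-t$ valid \emph{ready} messages from distinct nodes. Since at most $t$ nodes are Byzantine and $n-t \ge 2t+1$, at least $n-2t \ge t+1 \ge 1$ of these come from honest nodes, and every message is signed, so these senders did broadcast \emph{ready}. So there is at least one honest node that broadcast \emph{ready}. The goal is to trace back to the \emph{first} honest node that did so.

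Honest nodes broadcast \emph{ready} at discrete events, so take an honest node $u$ whose \emph{ready} broadcast precedes (or ties with) that of every other honest node. This is a well-defined choice, since only finitely many honest nodes broadcast \emph{ready} before $v$ terminates. By the protocol, $u$ broadcast \emph{ready} because it received either $n-t$ valid \emph{info\_sent} messages or $t+1$ \emph{ready} messages.

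In the second case, at least one of the $t+1$ \emph{ready} messages was sent by an honest node. That honest node must have broadcast \emph{ready} strictly before $u$ received the message, which contradicts the minimality of $u$. Hence $u$ triggered on the first condition. That is, $u$ received at least $n-t$ valid \emph{info\_sent} messages, which is exactly the claim.

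The only point needing care is the well-foundedness of the ``first honest \emph{ready}'' argument. Message authentication makes the causal order sound: Byzantine nodes cannot forge \emph{ready} messages on behalf of honest nodes. Honest sends are events in the execution, so a minimal one exists. An equivalent way to phrase this is by induction on the causal chain of honest \emph{ready} messages, which must terminate at a node triggered by \emph{info\_sent} messages.
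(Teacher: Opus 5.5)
Your proof is correct and follows the paper's argument. You take the first honest node to broadcast \emph{ready}, note that the $t+1$-\emph{ready} trigger would require an earlier honest \emph{ready} message, and conclude that this node received $n-t$ valid \emph{info\_sent} messages; your extra care about the existence of such a node (via at least $n-2t \ge 1$ honest \emph{ready} senders) and about ties only makes explicit what the paper leaves implicit.
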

\begin{proof}
Let $v$ be an honest node that terminates the dispersal, which means that $v$ must have received $n-t$ valid \emph{ready} messages. Let $v'$ be the first honest node to send a \emph{ready} message. If $v'$ broadcast a \emph{ready} message because it received $t+1$ \emph{ready} messages, it must have received such a message from at least $1$ honest node, contradicting the assumption that $v'$ is the first node to send a \emph{ready} message. It follows that $v'$ has received $n-t$ valid \emph{info\_sent} messages.
\end{proof}

Protocol~$\mathcal{P}$ further satisfies the agreement condition as the following lemma shows.

\begin{lemma}[Agreement]\label{lemma:avid_agreement} If an honest node terminates a dispersal, then all honest nodes eventually terminate the dispersal.
\end{lemma}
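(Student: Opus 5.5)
We split the termination condition of Algorithm~\ref{algo:disperse} into its three requirements and show that each one eventually holds at every honest node: $n-t$ \emph{ready} messages, storage data available, and \emph{column\_info} messages sent.

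\textbf{Ready messages.} Let $v$ be an honest node that terminates. It has received $n-t$ \emph{ready} messages, at least $n-2t\ge t+1$ of them from honest nodes. Every honest node therefore eventually receives $t+1$ \emph{ready} messages and broadcasts \emph{ready} itself, by the amplification rule. Consequently every honest node eventually receives at least $n-t$ \emph{ready} messages.

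\textbf{Storage data.} By Lemma~\ref{lemma:info_sent}, some honest node received $n-t$ valid \emph{info\_sent} messages, of which at least $n-2t$ come from honest nodes. We draw three consequences from this.
\begin{itemize}
\item An honest node sends \emph{info\_sent} only after accepting a valid \emph{spread} message. At that point it holds $S\neq\bot$, so by the reliable broadcast every honest node eventually holds the same set $S$.
\item Each such honest node $v_k$ has its column data. It has sent every $v_j$ with $(v_j,\_)\notin S$ a \emph{row\_info} message containing $M[j,k]$ and $M[j,k+n]$, each with a valid proof under $id$.
\item So every honest $v_j\notin S$ receives row elements from at least $n-2t$ distinct honest helpers, with two elements per helper at distinct positions. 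That gives at least $2(n-2t)\ge n-t$ elements of row $j$, since $n\ge 3t+1$. These suffice to decode $r_j$.
\end{itemize}

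Honest nodes in $S$ acked and hence already stored their row and column data. Thus every honest node eventually has its row data and $S$, and sends \emph{column\_info} to all $v_j\notin S$, which satisfies the third requirement for everyone. Each honest $v_j\notin S$ then receives at least $n-t$ valid column elements $M[k,j]$, one from each honest $v_k$, and decodes $c_j$. So every honest node obtains its full storage data.

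\textbf{Main obstacle.} The delicate point is the counting in the \emph{row\_info} step. We must check that the $n-2t$ guaranteed honest helpers deliver at least $n-t$ distinct valid elements of row $j$. The client's \emph{spread} contribution at positions $k+n$ is what doubles the count, and the inequality $2(n-2t)\ge n-t$ relies on $t<n/3$.

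A second point is consistency under a Byzantine client. Validity of proofs against the binding commitment $id=\mathcal{C}_M$ ensures that all received elements are entries of the committed $M$. We then argue that decoding yields values consistent with proofs. If the committed rows are not codewords, \texttt{decode} and \texttt{gen\_proofs} still produce well-defined stored data, with the retrieval check handling invalid encodings. I would treat this as the subtle place and handle it by appeal to the commitment semantics assumed in \S\ref{sec:model}.
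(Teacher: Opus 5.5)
Your proof is correct and follows the paper's argument essentially step for step: \emph{ready} amplification via $n-2t\ge t+1$, then Lemma~\ref{lemma:info_sent} giving at least $n-2t$ honest \emph{info\_sent} senders that hold $S$ and their column data, hence $2(n-2t)\ge n-t$ row elements for every honest node outside $S$, and finally $n-t$ honest \emph{column\_info} messages. You also make explicit a point the paper glosses over, namely that honest members of $S$ already hold their storage data from the \emph{disperse} message.

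One correction to your closing remark: under a Byzantine client, the retrieval-time commitment check is irrelevant here. The issue for agreement is liveness, i.e., whether re-encoded elements such as $c_k'[j]$ come with proofs that verify under $id$, so that honest recipients accept the \emph{row\_info} and \emph{column\_info} messages. What guarantees this is the \texttt{gen\_proofs} assumption of \S\ref{sec:model}, and the paper's proof relies on that assumption implicitly as well.
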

\begin{proof}
Assume that honest node $v$ terminated a dispersal for some data item $m$.
It must have received $n-t$ \emph{ready} messages, at least $n-2t$ of these messages must have been broadcast by honest nodes. Hence it follows that every honest node eventually receives at least $n-2t \ge t+1$ \emph{ready} messages and broadcasts a \emph{ready} message itself. As a result, all honest nodes eventually receive at least $n-t$ \emph{ready} messages.
Since $v$ terminated the dispersal, Lemma~\ref{lemma:info_sent} implies that there must be an honest node $v'$ that has received at least $n-t$ valid \emph{info\_sent} messages. At least $n-2t$ \emph{info\_sent} messages must have been broadcast by honest nodes, which implies that there is a set $S$ that is broadcast reliably and will therefore reach all honest nodes eventually. Moreover, it further entails that every node $v_k$, $(v_k, \_) \notin S$, will eventually get at least $2(n-2t)>n-t$ elements of its row $r_k$ and therefore be able to decode $r_k$. As all honest nodes eventually have their row data and set $S$, every honest node sends a \emph{column\_info} message to each node that does not occur in $S$. Thus, every honest node receives at least $n-t$ elements of its column and can decode it. Since every honest node receives $n-t$ ready messages, obtains its row and column data, and sends \emph{column\_info} messages, it terminates the dispersal.
\end{proof}

Next, we show that data items can be retrieved when dispersed to $\ell = n-t$ honest nodes and that the correctness property holds.

\begin{lemma}[Availability]\label{lemma:avid_availability}
If $\ell$ honest nodes terminate the dispersal of $m$, then any client can eventually retrieve some data item.
\end{lemma}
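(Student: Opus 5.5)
The plan is to show that once $\ell = n-t$ honest nodes have terminated, the retrieval algorithm (Algorithm~\ref{algo:client_retrieve}) cannot block at its single wait statement, and every later step either delivers a data item or delivers $\bot$; in both cases the client ``retrieves some data item'' in the sense of Definition~\ref{def:avid}.

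\textbf{Step 1: the honest nodes have valid row data.} First I would argue that every honest node that has terminated holds valid row data. The termination rule of Algorithm~\ref{algo:disperse} requires \texttt{get\_storage\_data}$(id) \ne \bot$. Row data is stored in only two ways:
\begin{itemize}
\item through a valid \emph{disperse} message, whose proofs were checked against $\mathcal{C}_M = id$; or
\item through decoding $n-2t$ valid \emph{row\_info} elements, and then using \texttt{gen\_proofs} to produce proofs that are consistent under $\mathcal{C}_M$.
\end{itemize}
In both cases the stored pair $(r_k, \pi^r_k)$ consists of $n-t$ elements with valid proofs. I would also appeal to Lemma~\ref{lemma:avid_agreement}: since some honest node terminated, all honest nodes eventually terminate. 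The hypothesis on $\ell$ already gives $n-t$ honest nodes, but this strengthening does no harm.

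\textbf{Step 2: the client receives enough valid responses.} An honest storage node that receives a \emph{retrieve} message and holds row data answers with a signed \emph{response} carrying $r_k$ and $\pi^r_k$ (Algorithm~\ref{algo:retrieve}). This response passes the client's validity check. Asynchrony is the subtle point: a retrieve message may reach a node before that node has its data, and it would then answer $\bot$. I would handle this as the paper does, by treating the client as resubmitting its request (or nodes as answering late). Because the $n-t$ terminated honest nodes keep their data permanently, the client eventually holds $n-t$ valid non-$\bot$ responses from distinct nodes. The wait therefore ends. I expect this to be the main obstacle, since the phrasing must fit the paper's remark that retrying is the client's responsibility.

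\textbf{Step 3: the remaining computation always produces an output.} The $n-t$ received rows have distinct indices, so for each column $j \in [n-t]$ the client holds $n-t$ distinct elements of column $j$ of $A'$. This is exactly what \texttt{decode} for the $(n,n-t)$ code requires, so $A$ is well defined. Re-encoding, computing the commitment and comparing it with $id$ are deterministic local operations, so the client delivers either \texttt{extract}$(A)$ or $\bot$. Both outcomes are some data item, which establishes availability. Which of the two is correct is left to the correctness lemma.
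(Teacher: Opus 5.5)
Your proposal is correct and takes the same route as the paper's proof. The paper's version is compressed: the $n-t$ terminated honest nodes store their row data, so the client's \emph{retrieve} request eventually yields $n-t$ valid responses, and the client returns some data item, possibly $\bot$. Your Steps 1--3 spell out what the paper leaves implicit: how row data is obtained, the retry caveat under asynchrony, and why column-wise decoding of $A'$ always succeeds.
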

\begin{proof}
If $\ell = n-t$ honest nodes terminate the dispersal, they all store their row data. Thus, after a client broadcasts a \emph{retrieve} request for the corresponding identifier, it will eventually receive at least $n-t$ responses containing valid row data and return some data, which may be the empty data item $\bot$.
\end{proof}

\begin{lemma}[Correctness]\label{lemma:avid_correctness}
If $\ell$ honest nodes terminate the dispersal of $m$ and clients $c'$ and $c''$ retrieve $m'$ and $m''$, then $m' = m''$. If the writing client was honest, then $m = m' = m''$.
\end{lemma}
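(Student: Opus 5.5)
The plan is to split the statement into its two claims and argue both from the retrieval algorithm (Algorithm~\ref{algo:client_retrieve}), the binding property of the commitment, and collision resistance of $id = \mathcal{C}_M$. The key observation is that a client delivers a non-$\bot$ value only after re-encoding its decoded matrix $A$ into a full $n\times 2n$ matrix $M$ and checking $\mathcal{C}_M = id$. It will also be useful that every response accepted by a client consists of row elements carrying valid proofs under $id$. By the binding property, each such element equals the committed value $M^\ast[i,j]$ for the single vector $M^\ast$ that $id$ commits to, except with negligible probability.

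\textbf{Consistency of two retrievals.} Suppose client $c'$ delivers $m' \neq \bot$ and client $c''$ delivers $m'' \neq \bot$. I will argue as follows.
\begin{enumerate}
\item Each client has recomputed some matrix $M'$, respectively $M''$, whose commitment equals $id$.
\item Since commitments are binding and deterministic, $M' = M'' = M^\ast$ except with negligible probability.
\item Because the code is systematic, \texttt{extract} reads $m$ from the positions $[n-t]\times[n-t]$ of this matrix. Hence $m' = m''$.
\end{enumerate}

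The remaining case, where one client obtains $\bot$ and the other obtains a data item, is the main obstacle. I plan to handle it with the following chain.
\begin{enumerate}
\item Every accepted row element equals the corresponding entry of $M^\ast$.
\item If $c'$ succeeds, then $M^\ast$ is a valid codeword of the two-dimensional encoding: it is exactly $M'$, which was built by \texttt{create\_matrix}.
\item Client $c''$ decodes each column of $A'$ from $n-t$ genuine entries of that valid codeword. Any $n-t$ entries of a codeword determine it uniquely, so $c''$ obtains the same $A$.
\item Consequently $c''$ recomputes the same $M^\ast$, passes the check, and outputs the same item.
\end{enumerate}
Conversely, if $M^\ast$ is not a valid codeword, no client can pass the check, except with negligible probability, so both clients output $\bot$.

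The hypothesis that $\ell = n-t$ honest nodes terminated is used only to guarantee that retrievals complete (Lemma~\ref{lemma:avid_availability}). The statement is about the values that are delivered.

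\textbf{Honest writer.} If the writing client is honest, then $id = \mathcal{C}_M$ for the correctly encoded $M$ of $m$. In that case $M^\ast = M$ is a valid codeword. By the argument above, every retrieving client decodes $A$, reconstructs exactly $M$, passes the commitment check, and extracts $m$. Therefore $m = m' = m''$.
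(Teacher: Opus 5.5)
Your proof is correct and uses the same ingredients as the paper's: re-encoding the decoded $A$, the check $\mathcal{C}_M = id$, and the binding property of the commitment. What differs is how you organize the case analysis. The paper splits on whether the writer is honest. For an honest writer, every client rebuilds the correct $M$ and returns $m$. For a Byzantine writer, the paper simply asserts, citing binding, that if $\mathcal{C}_M \neq id$ then this mismatch occurs at every client, so all clients return $\bot$.

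You instead split on the pair of delivered values. In the mixed case you supply the step the paper compresses into the word ``binding'':
\begin{itemize}
\item once one client passes the check, the committed matrix is the valid codeword $M'$ that this client rebuilt;
\item every proof-verified row element seen by another client is therefore an entry of $M'$;
\item any $n-t$ entries of a column of $A'$ determine that column, so the other client decodes the same $A$ and also passes.
\end{itemize}
This makes explicit why two clients decoding from different sets of $n-t$ rows cannot diverge. It also covers a Byzantine writer that commits to a well-formed codeword, which then behaves like an honest writer; the paper leaves that case implicit. The paper's version is shorter but leaves this step unargued.

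One small caution: ``the single vector $M^\ast$ that $id$ commits to'' is not well defined for an adversarially chosen commitment under position binding alone. Your argument never actually needs it, because in the critical case you instantiate $M^\ast$ as the matrix $M'$ computed by the succeeding client. Likewise, your closing ``conversely'' remark is just the contrapositive of the mixed case and can be dropped.
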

\begin{proof}

Let $m$ be any data item for which $\ell = n-t$ honest nodes have their storage data. Due to Lemma~\ref{lemma:avid_availability}, any client $c$ retrieves some data item eventually. More specifically, it obtains $n-t$ rows that can be utilized to construct matrix $A$ containing $(n-t)^2$ data parts.
According to Algorithm~\ref{algo:client_retrieve}, client $c$ reconstructs the entire matrix $M$ and the commitment $\mathcal{C}_M$.

If the writing client is honest, matrix $M$ and the commitment $\mathcal{C}_M$ are constructed correctly for data item $m$, implying that $\mathcal{C}_M = id$ at all clients, which proceed to return the same data item $m$ extracted from $A$.
On the other hand, if the writing client is Byzantine, it is possible that the identifier $id$ does not match the commitment $\mathcal{C}_M$. However, in this case, $\mathcal{C}_M \ne id$ must hold for every client under the assumption that it is infeasible for a Byzantine client to violate the binding property of the commitment scheme, resulting in $\bot$ being returned at all clients, which proves the claim.
\end{proof}

It remains to show that the recovery mechanism works correctly. When replacing a (failed) node $v_i$ with some replacement node $v_i'$, the storage nodes must ensure that $v_i'$ eventually receives its storage data for any data item $m$ that has been dispersed to $n-t$ honest nodes (regardless of whether $v_i$ obtained this data).

\begin{lemma}[Recovery] When running the recovery algorithm of $\mathcal{P}$ for any data item $m$ dispersed to $n-t$ honest nodes, the replacement node $v_i'$ eventually stores $(r_i,c_i,\pi_i^r,\pi_i^c)$.
\end{lemma}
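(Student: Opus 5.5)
We argue in three steps: every honest helper eventually holds its storage data, the elements it sends are exactly the $i$-th row and column elements of $M$, and $n-t$ verified such elements suffice for $v_i'$ to decode $r_i$ and $c_i$ and rebuild the proofs.

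\emph{Step 1: every honest helper gets its data.} Since $m$ has been dispersed to $n-t$ honest nodes, at least one honest node has terminated the dispersal. By Lemma~\ref{lemma:avid_agreement}, all honest nodes therefore eventually terminate it, and a terminated node satisfies \texttt{get\_storage\_data}$(id)\neq\bot$. So the \textbf{wait} in Algorithm~\ref{algo:recover} eventually ends at every honest node $v_k\neq v_i'$. Since at most $t$ nodes are Byzantine or have failed, at least $n-t$ honest helpers eventually send a \emph{recover} message to $v_i'$.

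\emph{Step 2: the messages carry the right elements.} Let $v_k$ be an honest helper. Its stored row data is $r_k=M[k,1..n-t]$, verified against $\mathcal{C}_M$ either on receipt of \emph{disperse} or via the decoding steps of Algorithm~\ref{algo:disperse}. By the property of the generator matrix, the $(n,n-t)$ re-encoding gives $r_k'=M[k,1..n]$. Likewise $c_k'=M[1..n,k]$, because the columns of $A'$ are $(n,n-t)$-codewords: the column code commutes with the row code, as $M$ restricted to its first $n$ columns is a product code.

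Hence $c_k'[i]=M[i,k]$ is the $k$-th element of $v_i$'s row, and $r_k'[i]=M[k,i]$ is the $k$-th element of $v_i$'s column. The \texttt{gen\_proofs} calls give valid proofs for both elements under $id=\mathcal{C}_M$. So every honest \emph{recover} message is valid.

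A Byzantine sender cannot make $v_i'$ accept a wrong element, since by the binding property of the commitment any element accepted with a valid proof equals the committed one. A message is accepted only if it carries valid proofs, and a duplicate sender contributes only one index. Therefore the set $E$ that $v_i'$ eventually collects contains $n-t$ correct elements of row $i$ of $M$ restricted to the first $n$ columns, at $n-t$ distinct positions, and $n-t$ correct elements of column $i$ at distinct positions.

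\emph{Step 3: decoding and proof regeneration.} Row $i$ restricted to its first $n$ entries is an $(n,n-t)$-codeword, and so is column $i$. Hence \texttt{decode} on $n-t$ elements of each yields the systematic parts $r_i$ and $c_i$. Given $n-t$ valid proofs in each line, \texttt{gen\_proofs} produces $\pi_i^r$ and $\pi_i^c$, so $v_i'$ stores $(r_i,c_i,\pi_i^r,\pi_i^c)$.

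The main obstacle is the case of a Byzantine client. The argument needs every row of $M$, and every column restricted to its first $n$ entries, to be a codeword; otherwise decoding from different subsets could give inconsistent results. I would handle this by noting that recovery only has to reproduce the committed values. Each honest helper's data and every accepted element are bound to $\mathcal{C}_M$, and the decoded $r_i$ yields proofs that verify against $\mathcal{C}_M$ only when consistent. So whatever honest nodes hold, and what $v_i'$ reconstructs, is determined by the commitment.

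Making this fully rigorous may require assuming that \texttt{gen\_proofs} validates against the commitment, in the same spirit as the robustness discussion. If that fails, I would restrict the claim to the data that is checked against $id$ at retrieval.
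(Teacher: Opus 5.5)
Your argument follows the paper's: each honest helper holding its storage data sends $c_k'[i]=M[i,k]$ and $r_k'[i]=M[k,i]$ with proofs, $v_i'$ collects $n-t$ valid \emph{recover} messages, decodes $r_i$ and $c_i$, and fills in the missing proofs with \texttt{gen\_proofs}; your appeals to Agreement, to the product-code identity $c_k'=M[1..n,k]$, and to binding against Byzantine senders make explicit what the paper's proof leaves implicit. Your closing paragraph does not actually settle the case of a Byzantine client whose committed $M$ is not a consistent codeword (there the re-encoded elements need not match the commitment, so $v_i'$ may never obtain valid data), but the paper's proof tacitly assumes a consistently encoded $M$ as well, so this is not a gap relative to the paper.
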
\label{lemma:recovery}
\begin{proof}
Consider any node $v_k$ among the $n-t$ honest nodes that have their storage data for data item $m$.
When Algorithm~\ref{algo:recover} is executed at $v_k$ for message $m$ with identifier $id$, it encodes $r_k$ and $c_k$ into a row $r_k'$ and column $c_k'$ of $n$ elements, respectively, and generates the corresponding proofs ${\pi_k^r}'$ and ${\pi_k^c}'$. Subsequently, $v_k$ sends a \emph{recover} message to $v_i'$ containing $c_k'[i]$, $r_k'[i]$, and the corresponding proofs ${\pi_k^r}'[i]$ and ${\pi_k^c}'[i]$.

Thus, node $v_i'$ eventually receives at least $n-t$ valid \emph{recover} messages from distinct nodes.
By decoding the row and the column elements, $v_i'$ obtains $r_i$ and $c_i$.
If $v_i'$ did not receive a message from some node $v_j$, $j\in[n-t]$, the proofs $\pi_M[i, j]$ and $\pi_M[j, i]$ will be missing. However, since $v_i'$ has row $r_i$, column $c_i$, and $n-t$ proofs for both its row and column, it can create all proofs for $r_i$ and $c_i$ using the function \texttt{gen\_proofs} as discussed in \S\ref{sec:model}.
Hence, $v_i'$ can reconstruct and store $(r_i, c_i, \pi_i^r, \pi_i^c)$ as required.
\end{proof}

\subsubsection{Complexity}

The space complexity of Protocol $\mathcal{P}$ is the sum of all bytes that the nodes store for any data item $m$ of size $|m|$.

\begin{theorem}[Space complexity]
\label{theorem:space_complexity}
The space complexity of Protocol $\mathcal{P}$ for storing $m$ is $3|m| + O(n^2|\pi|)$, where $|\pi|$ denotes the size of a commitment proof $\pi$.
\end{theorem}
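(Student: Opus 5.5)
The plan is a direct counting argument over what each honest node retains once dispersal has terminated. By the protocol description and the Termination, Agreement and Recovery lemmas, the long-term state of node $v_i$ for $m$ is exactly its storage data $(r_i, c_i, \pi_i^r, \pi_i^c)$. Here $r_i = M[i,1..n-t]$ and $c_i = M[1..n-t,i]$, each consisting of $n-t$ elements, together with $n-t$ proofs each. Everything else a node handles during dispersal is only transient buffering. This covers the \emph{spread} payloads, the \emph{row\_info} and \emph{column\_info} elements, and the encoded vectors $r_k'$ and $c_k'$. By the definition of space complexity, that buffering is not counted. The set $S$ of $n-t$ signatures is not needed after termination, and in any case it is absorbed into the additive term.

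First, I will fix the element size. The function \texttt{create\_matrix} splits $m$ into $(n-t)^2$ elements, so each element of $A$, and hence (by the fragment-size convention in \S\ref{sec:model} with $|m|/k \ge \lceil \log n\rceil$) each element of $M$, has size $|m|/(n-t)^2$. Second, I will sum over nodes. Each node stores $2(n-t)$ elements, i.e.\ $2|m|/(n-t)$ bits of data, plus $2(n-t)$ proofs of size $|\pi|$. Over at most $n$ nodes this gives at most
\[
\frac{2n}{n-t}\,|m| + 2n(n-t)|\pi| .
\]
The proof term is clearly $O(n^2|\pi|)$. Any $O(|\pi|)$-size commitment/identifier stored per node adds only $O(n|\pi|)$.

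The remaining step, and the only one needing care, is bounding $2n/(n-t)$ by $3$. Since $t = \lfloor (n-1)/3\rfloor \le (n-1)/3 < n/3$, we have $n-t > 2n/3$, so $2n/(n-t) < 3$. I should check whether the statement intends an exact $3|m|$ or an upper bound. For $n = 3t+1$ the factor is $2(3t+1)/(2t+1)$, which tends to $3$ from below, so I will present the bound as ``at most $3|m|$''. This is consistent with the table's asymptotic reading.

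A minor subtlety is the maximum over executions with a Byzantine client. Honest nodes still store only their fixed-size $(r_i,c_i)$ and proofs, because every stored vector is validated or decoded to length $n-t$. Hence the same bound holds regardless of client behavior.
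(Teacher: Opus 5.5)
Your proposal is correct and follows the paper's proof essentially step for step: element size $|m|/(n-t)^2$, $2(n-t)$ stored elements per node, a total of $2n|m|/(n-t) < 3|m|$ via $t < n/3$, and $O(n^2)$ proofs for the additive term. Your remarks on transient buffering, the set $S$, and Byzantine clients are harmless additions that the paper leaves implicit.
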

\begin{proof}
The matrix $M$ contains $m$ in $(n-t)(n-t)$ elements, i.e., the size of a single element is ${|m|}/{(n-t)^2}$.
Each node $v_i$ stores $r_i$ and $c_i$ containing $n-t$ elements each. Thus, all $n$ nodes together store $$2 \cdot n \cdot (n-t)\cdot \frac{|m|}{(n-t)^2} < 3|m|,$$ where the inequality follows from the bound $t < n/3$. Since each node stores the proof of commitment for its $2(n-t)$ elements of $M$, all nodes together store $O(n^2)$ proofs of size $|\pi|$.
\end{proof}

As far as the communication complexity is concerned, Protocol $\mathcal{P}$ guarantees the following bounds for the different operations.

\begin{theorem}[Communication complexity]
The communication complexity of Protocol $\mathcal{P}$ for $m$ is
%\begin{itemize}
 %\item
 $6|m| + O(n^2|\pi|)$ for dispersal,
 %\item
 $\frac{3}{2}|m| + O(n^2|\pi|)$ for retrieval, and
 %\item
 $\frac{9}{2}\frac{|m|}{n} + O(n|\pi|)$ for recovery.
%\end{itemize}
\end{theorem}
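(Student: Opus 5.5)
The proof counts, for each operation, the elements transmitted by the client and by honest nodes. Each element of $M$ has size $|m|/(n-t)^2$. Throughout, I use $n-t > 2n/3$, i.e.\ $n/(n-t) < 3/2$, so $n^2/(n-t)^2 < 9/4$. Every transmitted element carries one proof, so the proof overhead is bounded by the number of elements sent, which is $O(n^2)$ for dispersal and retrieval and $O(n)$ for recovery. Control messages are also small. \emph{ack}, \emph{info\_sent} and \emph{ready} messages total $O(n^2)$ signatures. The reliable broadcast of $S$ carries $O(n)$ signatures; with a Bracha-style broadcast this is $O(n^2)$ messages of that size, or can be handled with erasure-coded broadcast. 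I would absorb all of this into the additive term, treating signature size like $|\pi|$ and, if needed, stating that a communication-efficient reliable broadcast is used for $S$.

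\textbf{Dispersal.} I would sum four data flows.
\begin{itemize}
\item[(i)] \emph{disperse} messages: $n$ nodes each receive $2(n-t)$ elements. This gives $2n(n-t)\cdot|m|/(n-t)^2 = 2n|m|/(n-t) < 3|m|$.
\item[(ii)] \emph{spread} messages: each of the $n$ nodes receives one element per node in $\bar V$, and $|\bar V| = t$. This gives $nt$ elements.
\item[(iii)] \emph{row\_info} messages: each honest node sends two elements to each of the $t$ nodes in $\bar V$. This gives at most $2nt$ elements.
\item[(iv)] \emph{column\_info} messages: each honest node sends one element to each node in $\bar V$. This gives at most $nt$ elements.
\end{itemize}
Steps (ii)--(iv) total $4nt$ elements, i.e.\ $4nt|m|/(n-t)^2$. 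Writing $x=n/(n-t)<3/2$ gives $t/(n-t) = x-1 < 1/2$. Hence $4nt/(n-t)^2 = 4x(x-1) < 4\cdot\tfrac32\cdot\tfrac12 = 3$. Adding $2x<3$ for step (i) yields $6|m|$. Honest nodes do not send duplicates, and only honest nodes and the client are counted, so Byzantine clients cannot inflate this beyond these per-node bounds. The key point is that each honest node sends \emph{row\_info} and \emph{column\_info} at most once per target in $\bar V$. I would still verify that a Byzantine client cannot make $|\bar V|>t$. This holds because $S$ must contain $n-t$ valid signed \emph{ack}s, and the \emph{spread} validity check bounds the payload to exactly $t$ tuples.

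\textbf{Retrieval.} The client sends $n$ constant-size \emph{retrieve} messages. Each of at most $n$ honest nodes returns $n-t$ elements with proofs. This gives $n(n-t)|m|/(n-t)^2 = x|m| < \tfrac32|m|$, plus $O(n^2|\pi|)$.

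\textbf{Recovery.} Each of at most $n$ helper nodes sends one \emph{recover} message containing two elements and two proofs. This gives $2n|m|/(n-t)^2 + O(n|\pi|)$. Since $2n/(n-t)^2 = 2x^2/n < \tfrac{9}{2}\cdot\tfrac{1}{n}$, the bound $\tfrac92|m|/n + O(n|\pi|)$ follows.

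The main obstacle is the dispersal accounting. One must make sure that every message type, including those triggered by Byzantine behaviour such as \emph{ready} amplification and reliable broadcast, stays within the stated per-honest-node counts. One must also check that the constants indeed combine to $3+3$ via $x<3/2$.
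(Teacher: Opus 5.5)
Your proposal is correct and follows the paper's proof essentially step by step. It uses the same split into $3|m|$ for the \emph{disperse} messages plus $nt+2nt+nt=4nt$ elements for \emph{spread}, \emph{row\_info} and \emph{column\_info}, the same per-node counts for retrieval and recovery, and the same assumption that $S$ is sent via a communication-efficient reliable broadcast (plain Bracha would cost $O(n^3|\kappa|)$). The only cosmetic difference is that you bound $4nt/(n-t)^2$ via $x=n/(n-t)<3/2$ as $4x(x-1)<3$, whereas the paper bounds each element by $\frac{9}{4}|m|/n^2$ and uses $t<n/3$ to obtain $\frac{9}{4}|m|+\frac{3}{4}|m|$.
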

\begin{proof}
\emph{Dispersal}. The client first sends the row and column data to all nodes directly in \emph{disperse} messages, which together require $3|m|+O(n^2|\pi|)$ bits to be sent due to Theorem~\ref{theorem:space_complexity}. The reliable broadcast of $S$ of size $|S| \in O(n|\kappa|)$, where $\kappa$ denotes the size of a signature, has a communication complexity of $O(n^2|\kappa|)$ assuming the use of an efficient reliable broadcast protocol. We further assume that $|\kappa| \in O(|\pi|)$.

Due to $t < n/3$, every element $M[i,j]$ has a size of $$\frac{|m|}{(n-t)^2} < \frac{9}{4}\frac{|m|}{n^2}.$$
The client sends a \emph{spread} message containing $t$ elements, including a proof, to each node. When a node receives a 
\emph{spread} message, it sends two elements and the corresponding proofs, plus the commitment, to each of the $t$ nodes that do not occur in $S$. These two rounds together incur a communication complexity of
$$(1 + 2)\cdot n \cdot t \cdot \left(\frac{9}{4}\frac{|m|}{n^2} + O(|\pi|)\right) < \frac{9}{4}|m| + O(n^2|\pi|).$$

Moreover, each node sends a row element of size $\frac{|m|}{(n-t)^2} < \frac{9}{4}\frac{|m|}{n^2}$ to the $t$ nodes that do not occur in $S$ in a \emph{column\_info} message, which adds
$$n \cdot t \cdot \left(\frac{9}{4}\frac{|m|}{n^2} + O(|\pi|)\right) < \frac{3}{4} |m| + O(n^2|\pi|)$$
to the communication complexity.
Lastly, the constant-sized \emph{ack} and \emph{info\_sent} messages require $O(n)$ and $O(n^2)$ bits to be sent, respectively. Thus, the communication complexity of the entire dispersal process is upper bounded by $6|m| + O(n^2|\pi|)$.
\\
\emph{Retrieval}. Each node sends its stored row data to the client, together with $n-t$ proofs and the commitment, therefore the information sent around amounts to 
\begin{align*}
n\left(\frac{|m|}{(n-t)^2}(n-t) + O(n|\pi|)\right) &= \frac{n|m|}{(n-t)} + O(n^2|\pi|) \\
&< \frac{3}{2}|m| + O(n^2|\pi|),
\end{align*}
again using that $t < n/3$.
\\
\emph{Recovery}. Every node sends two elements of $M$ together with the corresponding proofs and the commitment to the replacement node for a total size of
$$n \cdot 2 \cdot \left(\frac{|m|}{(n-t)^2} + O(|\pi|)\right) < \frac{9}{2}\frac{|m|}{n} + O(n|\pi|).$$
\end{proof}

Note that the additive term for the commitments and their proofs for the space complexity and the communication complexity of dispersal and retrieval can be reduced by a factor of $n$ using batching techniques (see e.g.,~\cite{alhaddad2024}).
In this case, the additive term becomes $O(n|\pi| + n^2\log(n))$, where the term independent of $|\pi|$ is due to the fact that $O(n^2)$ elements of size at least $\lceil\log(n)\rceil$ bits---the minimum element size as discussed in \S\ref{sec:model}---are both stored and transmitted.

We conclude this section with a round complexity analysis.

\begin{theorem}[Round complexity]
 For data item $m$, Protocol $\mathcal{P}$ requires 7 communication rounds for dispersal, 2 rounds for retrieval, and 1 round for recovery.
\end{theorem}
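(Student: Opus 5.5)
The plan is to count rounds along the longest causal chain of messages in each algorithm, assuming a round consists of sending, receiving, and local computation, and that all steps whose triggers are met in the same round run together. For dispersal, we trace the worst-case path in Algorithm~\ref{algo:client_disperse} and Algorithm~\ref{algo:disperse}, which is a node $v_k$ with $(v_k,\_)\notin S$ that must reconstruct both its row and column data before it can terminate. For retrieval and recovery, we simply read off the request/response structure of Algorithms~\ref{algo:client_retrieve}, \ref{algo:retrieve} and~\ref{algo:recover}.

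\emph{Dispersal.} We will count the rounds as follows:
\begin{enumerate}
\item \emph{disperse} messages from the client to the nodes.
\item \emph{ack} messages back to the client.
\item The reliable broadcast of \emph{stored}$(id,S)$, which we instantiate with a Bracha-style broadcast (send, echo, ready) and therefore charge rounds 3--5. The \emph{spread} messages are sent concurrently in round 3 but are only processed once $S$ is delivered.
\item Round 6: the nodes send \emph{row\_info} and \emph{info\_sent}.
\item Round 7: nodes in $\bar V$ decode their rows, and all nodes send \emph{column\_info}. Honest nodes that have collected $n-t$ \emph{info\_sent} messages broadcast \emph{ready} in the same round, and the $n-t$ \emph{ready} messages together with the column elements arrive at the end of this round.
\end{enumerate}
Termination then requires only local computation: decoding the column and checking the \emph{ready} count. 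The total is therefore 7.

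\emph{Retrieval} consists of \emph{retrieve} followed by \emph{response}, i.e., 2 rounds; decoding and the commitment check are local. \emph{Recovery} is triggered externally at each helper $v_k$, which sends a single \emph{recover} message to $v_i'$; decoding and \texttt{gen\_proofs} are local, so it takes 1 round.

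The main obstacle is the dispersal count, in two respects. First, the number of rounds charged to the reliable broadcast depends on which broadcast protocol is used, so we fix a 3-round Bracha-style broadcast; the paper only says ``two or more''. Second, we must argue carefully that the \emph{ready} exchange overlaps with the \emph{column\_info} round, rather than adding an extra round after \emph{info\_sent} arrives. This works because \emph{info\_sent} is sent in round 6 together with \emph{row\_info}, so \emph{ready} can go out in round 7. We will state explicitly that we count the failure-free schedule in which each round's messages arrive within that round, and that the 7 rounds are realized by the chain disperse, ack, RB (3 rounds), row\_info/info\_sent, column\_info/ready.
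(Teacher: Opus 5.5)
Your proposal is correct and matches the paper's proof. It counts two rounds for \emph{disperse} and \emph{ack}, a 3-round reliable broadcast that delivers $S$ after round five, \emph{row\_info}/\emph{info\_sent} in round six and \emph{column\_info}/\emph{ready} in round seven, and reads off 2 and 1 rounds for retrieval and recovery. The paper cites an efficient 3-round broadcast (Cachin--Tessaro style) rather than plain Bracha; this gives the same round count, but plain Bracha echoing the full set $S$ would not meet the $O(n^2|\kappa|)$ broadcast cost assumed in the communication analysis.
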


\begin{proof}
It is easy to see that the retrieval of a confirmed data item requires $2$ rounds of communication: The client sends a \emph{retrieve} message to all nodes and waits for the corresponding \emph{response} messages. The recovery algorithm consists of a single \emph{recover} message sent to the replacement node. The dispersal algorithm, on the other hand, is more complex. We focus on the complexity when the client is honest because the dispersal may not terminate when the writing client is Byzantine. As shown in Figure~\ref{fig:dispersal}, there are $5$ steps; however, a reliable broadcast requires more than one round of communication. Efficient reliable broadcast is possible in $3$ rounds~\cite{cachin2005, locher2024} without additional cryptographic assumptions. Therefore, the nodes receive set $S$ after $5$ rounds and then send \emph{row\_info} and \emph{info\_sent} messages in the sixth round, followed by the \emph{col\_info} and \emph{ready} messages in the seventh round.
\end{proof}

If latency is a concern, the number of rounds can be reduced through a few modifications: The set $S$ can be added to the \emph{spread} payload, which enables the nodes to send \emph{row\_info} and \emph{info\_sent} messages already in the fourth round, resulting in a total of $6$ rounds without affecting the communication complexity. It is easy to see that a malicious client sending different sets $S$ does not affect correctness.
The number of rounds can further be reduced to $5$ in two ways. First, set $S$ can be added to the \emph{row\_info} messages too, which obviates the use of reliable broadcast but introduces a cost of $O(n^3|\pi|)$ as every node broadcasts set $S$ of size $O(n|\pi|)$. Second, this cubic term can be avoided by using a two-round reliable broadcast algorithm~\cite{abraham2021, locher2024}. However, this algorithm relies on threshold cryptography, introducing another security assumption and requiring a secure setup of threshold key shares.

% !TEX root = ./avid_spaa.tex
\section{Protocol Variant}
\label{sec:avid_msr}

The protocol of the previous section can be modified to reduce the storage overhead at the expense of retrieval and recovery performance. We parameterize this trade-off by a storage overhead factor $\gamma \coloneqq n/\ell$, where $\ell$ is the number of nodes that must provide their data for the retrieval of $m$. As discussed below, our protocol variant covers the range $\gamma \in [\frac{3}{2}, n]$, with smaller $\gamma$ requiring higher retrieval and recovery costs.
For reference, the main protocol of \S\ref{sec:avid} achieves a storage overhead factor of $3$.

The main idea is to use a simplified version of our dispersal algorithm introduced in \S\ref{sec:dispersal} together with a 
\emph{minimum storage regenerating (MSR) code}~\cite{dimakis2010}, which inherently supports efficient recovery. As the MSR structure itself encodes sufficient redundancy for repair, nodes only need to store row data and the corresponding proofs. For the sake of brevity, we provide a high-level description without a formal specification and analysis.
Since this protocol variant is based on the same techniques, its correctness can be verified using similar arguments. We further restrict the discussion to the complexity with respect to the message size $|m|$, without considering the total size of all proof data and control messages, which are again bounded by $O(n^2|\pi|)$.

While an MSR code also requires a threshold $\ell$ of elements sufficient for reconstruction, it further has a parameter $d$, which specifies the number of nodes from which data needs to be obtained to recover the state of a node.
In a Byzantine setting, it must hold that $\ell \le n-t$ and $d \le n-t$ because $t$ nodes may not respond.
For $t \coloneqq \lfloor(n-1)/3\rfloor$, we have that $\ell \le \frac{2}{3}n$ as $n \rightarrow \infty$, which together with the trivial lower bound $\ell \ge 1$ implies the claimed range $\gamma = n/\ell \in [\frac{3}{2}, n]$.
For node recovery, a replacement node needs to obtain ${|m|}/(\ell(d-\ell+1))$ bits from $d$ distinct nodes.

Since our protocol is based on a matrix embedding, we apply the same concept with an MSR code, placing $m$ into an $\ell \times \ell$-matrix, which results in an element size of $|m|/\ell^2$. Each node stores row data consisting of $\ell$ elements for a total of $|m|/\ell$ bits. As all nodes store the same amount of data, we get that the space complexity is $n|m|/\ell$. Recall that $\gamma \coloneqq n/\ell$, so the space complexity is $\gamma|m|$, which shows that $\gamma$ indeed corresponds to the overhead factor.

The modified dispersal algorithm works as follows. The client $c$ first sends only the row data to the nodes and collects $n-t$ \emph{ack} messages. As in Protocol $\mathcal{P}$, $c$ then reliably broadcasts the set $S$ containing $n-t$ valid node-signature pairs but, unlike Protocol $\mathcal{P}$, $c$ sends $2t$ elements, namely $M[j,i]$ and $M[j,i+n]$ together with the corresponding proofs for all $v_j$ that do not occur in $S$, to each node $v_i$ in a \emph{spread} message. For each such index $j$, $v_i$ forwards $M[j,i]$ and $M[j,i+n]$ to node $v_j$ in a \emph{row\_info} message. Every node still broadcasts an \emph{info\_sent} message after disseminating the \emph{spread} messages as before. In fact, the rest of the algorithm remains the same, except that \emph{column\_info} messages are not required.
The retrieval algorithm remains effectively unchanged, with the only difference that rows are requested from $\ell + t \le n$ nodes, taking into account that up to $t$ nodes may not respond. Similarly, recovery data must be requested from $d+t \le n$ nodes as mentioned before.

It remains to derive the communication complexity of the different operations with respect to parameter $\gamma$.
As far as dispersal is concerned, the client first transmits all data to be stored, i.e., $\gamma|m|$ bits. The $2t$ elements in each \emph{spread} message are first sent from the client to all nodes, followed by every node sending $2$ of these elements to each of the $t$ nodes that do not occur in $S$, which amounts to
$$(2t \cdot n + n \cdot 2t)\frac{|m|}{\ell^2} = \frac{4t}{n}\gamma^2|m| < \frac{4}{3}\gamma^2|m|$$ since $t < n/3$.
Thus, the communication complexity of dispersal is $$\left(1 + \tfrac{4}{3}\gamma\right)\gamma|m|.$$

As far as retrieval is concerned, since row data is requested from $\ell + t$ nodes, the communication complexity is
$$\frac{|m|}{\ell}(\ell+t) = \left(1 + \frac{t}{\ell}\right)|m| < \left(1+\frac{\gamma}{3}\right)|m|,$$ again due to the fact that $t < n/3$.

Lastly, the communication complexity for recovery is $$\frac{|m|(d+t)}{\ell(d-\ell+1)},$$ which is minimized when $d \le n-t$ is maximized, i.e., $d = n-t$.
If further $d = \ell$, then we have that
$$\frac{|m|(d+t)}{\ell(d-\ell+1)} = \frac{|m|n}{\ell} = \gamma|m|,$$
i.e., the communication complexity of recovery matches the communication complexity of retrieval in this extreme case, which corresponds to $\gamma = \frac{3}{2}$ because $\ell = \frac{2}{3}n$ for $t \coloneqq \lfloor(n-1)/3\rfloor$ and $n \rightarrow \infty$.

For $\ell < d = n-t$, the communication complexity is bounded by
 $$\frac{|m|(d+t)}{\ell(d-\ell+1)} < \frac{|m|n}{\ell(n-t-\ell)}  \stackrel{t < n/3}{<} \frac{\gamma^2}{\frac{2}{3}\gamma - 1}\frac{|m|}{n}.$$
 
Table~\ref{tab:comparison} shows concrete space and communication complexities for $\gamma = 1.8$ and $\gamma =  1.5$. 
As efficient recovery is not possible for $\gamma =  1.5$, it is worth pointing out that any erasure code
with the properties specified in \S\ref{sec:model}
 can be used instead of an MSR code. In this case, the only difference to protocol $\mathcal{P}$ in \S\ref{sec:avid} is that the nodes neither store nor distribute any column data.
As stated before, the numbers in the table reveal that this protocol variant may be preferable when a low space complexity and communication complexity of dispersal is prioritized for the considered use case.

% !TEX root = ./avid_spaa.tex
\section{Related Work}
\label{sec:related_work}

%\enlargethispage{3pt}

Rabin~\cite{rabin1990} established the foundation of verifiable information dispersal, providing a primitive for distributing data with verifiable consistency.
Cachin and Tessaro~\cite{cachin2005} subsequently generalized this concept to asynchronous networks, achieving optimal storage and retrieval complexity.
Since then, AVID has become a core component in various systems, including state-machine replication~\cite{das2021,hendricks2007}, secure multiparty computation~\cite{lu2019}, decentralized mass storage~\cite{damgaard2019,fisch2019,juels2007,goren2025} and distributed ledgers~\cite{cecchetti2019,nazirkhanova2022,yang2022}.

\paragraph*{Storage and Dispersal Efficiency.} 
The original AVID protocol~\cite{cachin2005} achieves optimal storage complexity but suffers from a communication complexity for dispersal of $O(n|m|)$ because each storage node obtains the entire message but retains only its fragment to support later retrievals.
Subsequent work~\cite{alhaddad2021,das2021,hendricks2007,yang2022} proposed protocols that offer varying trade-offs between fault tolerance and resource consumption, particularly storage and communication overhead.
Most of these protocols follow the same pattern: First, the data is erasure-coded into multiple fragments, then a fragment is sent to each node, followed by reliably broadcasting a commitment to the set of all fragments so each node can verify that its own fragment is consistent with the commitment. This approach guarantees that each node stores data proportional to the size of its own fragment and achieves near-optimal storage, dispersal, and retrieval costs.

Some of these works have focused on reducing the dispersal cost to $O(|m|)$.
Alhaddad et al.~\cite{alhaddad2022} achieve a dispersal, storage, and retrieval cost of $3|m|$ using only collision-resistant hash functions.
In follow-up work, Alhaddad et al.~\cite{alhaddad2024} attained optimal cost for storage and retrieval but at an increased communication complexity of $9|m|$ for dispersal by utilizing a two-dimensional encoding scheme.
In their construction, the message is mapped into a matrix where rows and columns are individually encoded with maximum distance separable (MDS) codes~\cite{singleton1964}. To recover a column, $n-t$ elements are necessary, while $n-2t$ elements suffice to recover a row. 
During the dispersal algorithm, the client sends data from an encoded column to each node, and each node then forwards an element to the other nodes and reconstructs its row data from the received elements. Together with a two-dimensional commitment scheme, this ensures that consistency can be verified.
Only the row data is stored by the nodes, whereas the column data is merely used during the dispersal algorithm and deleted afterwards.
They also study how to cope with fail-stop nodes (in addition to the Byzantine nodes) and 
support batching and partial data retrieval for the scenario where a reader is only interested in parts of the data. %, they do not need to gather the whole data.
Some of their constructions are \emph{robust} in the sense that, upon successful dispersal, the fragments are consistent with the encoding of the message that can be subsequently retrieved.
All these properties can be supported by our protocols as well. However, like most classic AVID protocols, they lack efficient node recovery mechanisms; replacing a failed node requires downloading the entire data item, and the constant in the dispersal communication complexity is larger than in our protocols.

\paragraph*{Repair and Node Recovery.} 
The problem of efficient node repair in distributed storage was popularized by Dimakis et al.~\cite{dimakis2010} through the introduction of \emph{regenerating codes}. 
These codes, particularly minimum storage regenerating (MSR) codes, enable the reconstruction of a missing node's data share by downloading significantly less than the original data size at the expense of a slightly higher storage and retrieval complexity.

The Red Stuff protocol, introduced by Danezis et al.~\cite{danezis2025}, achieves a recovery bandwidth that is inversely proportional to the number of nodes. At a high level, Red Stuff uses an \emph{asymmetric} two-dimensional encoding: the data is placed in an $n\times n$ matrix whose rows require $n-t$ elements to decode and whose columns require $n-2t$ elements to decode. Each node $v_i$ stores the first $n-t$ elements of row $i$ and the first $n-2t$ elements of column $i$.
Each matrix element has size $|m|/((n-t)(n-2t)) < 4.5|m|/n^2$ for $t < n/3$, resulting in a storage complexity of $n(n-t+n-2t)\cdot 4.5|m|/n^2 \approx 4.5|m|$, ignoring commitment proofs.
This scheme provides the redundancy necessary for efficient recovery, but the storage overhead factor is rather large.
During recovery of node $v_i$, up to $2t+1$ nodes send the $i^{th}$ element of their row and up to $n$ nodes send their $i^{th}$ column element, for a total of $(n+2t+1)\cdot 4.5|m|/n^2 \approx 7.5|m|/n$ for $t \approx n/3$.
A dispersal requires $4.5|m|$ bits, but up to $t$ honest nodes may not have received their data at the time of the first retrieval and must run recovery first, rendering the effective dispersal complexity $\approx 4.5|m|+ t\cdot7.5|m|/n \approx 7|m|$ for $t\approx n/3$.

Our work bridges the gap between storage-optimal but repair-heavy protocols such as the protocol by Alhaddad et al.~\cite{alhaddad2024} and repair-optimal but storage-heavy protocols like Red Stuff~\cite{danezis2025}.
The performance gains stem from two co-designed mechanisms.

\textbf{Encoding.}
Red Stuff uses an $n\times n$ matrix with an asymmetric encoding ($(n,n-t)$-row-wise, $(n,n-2t)$-column-wise), storing $n-t$ row elements and $n-2t$ column elements per node, each of size $4.5|m|/n^2$ for $t<n/3$, resulting in $4.5|m|/n$ bits per node.
We use an $n\times 2n$ matrix with a symmetric $(n,n-t)$-column-wise encoding followed by a $(2n,n-t)$-row-wise encoding; the element size is $2.25|m|/n^2$ and each node stores $2(n-t)\approx 4n/3$ elements, yielding $3|m|/n$ bits per node.

\textbf{Dispersal.}
Naively disseminating the $n\times 2n$ matrix would absorb the encoding gains.
The savings arise because after the client sends row and column data and collects $n-t$ signed acknowledgments in the set $S$, every node that has contributed an acknowledgment to $S$ already holds one element of every other node's row (from its own column) and one element of every other node's column (from its own row).
Therefore, the client only needs to send the second half of the rows for the $t$ nodes that do not occur in $S$, reducing the dispersal cost to $6|m|$ versus $7|m|$ for Red Stuff and $9|m|$ for the protocol by Alhaddad et al.~\cite{alhaddad2024}.
%Furthermore, our constructions support efficient proof-of-storage auditing. The proposed mechanism in \S\ref{sec:proof_of_storage} refines the sketch in the Red Stuff paper.

% !TEX root = ./avid_spaa.tex
\section{Conclusion}
\label{sec:conclusion}

There is a growing demand for distributed information dispersal and retrieval solutions for a broad range of use cases with different requirements. 
In systems with high churn rates or large system states, it is of paramount importance to bound the cost of dispersals and state recovery, not just storage overhead. Previous work showed that the state of nodes can be recovered efficiently---at the cost of a noticeably larger space complexity and communication complexity for other operations. In this work, we showed that it is possible to improve dispersal and recovery operations, while achieving optimal communication complexity for retrieval.
Determining the lower bounds for the trade-offs across all operations simultaneously is an intriguing open problem that may be addressed in future research.

\bibliographystyle{ACM-Reference-Format}
\balance
\bibliography{decentralized_storage}

\end{document}